\documentclass[conference]{IEEEtran}
\IEEEoverridecommandlockouts
\usepackage{amsmath,amsfonts} 
\usepackage{graphicx}
\usepackage{textcomp}
\usepackage{xcolor}
\usepackage{xspace}
\usepackage{soul}
\def\BibTeX{{\rm B\kern-.05em{\sc i\kern-.025em b}\kern-.08em
    T\kern-.1667em\lower.7ex\hbox{E}\kern-.125emX}}
    
\usepackage{subcaption}
\captionsetup{compatibility=false}

\usepackage{amsthm}
\newtheorem{definition}{Definition}
\newtheorem{lemma}{Lemma}
\newtheorem{theorem}{Theorem}
\newtheorem{corollary}{Corollary}
\newtheorem*{problem}{Problem\textrm{ }Statement}

\usepackage{ragged2e} 
\usepackage{wrapfig}

\usepackage[most]{tcolorbox}
\newtcolorbox{myboxi}[1][]{
  breakable,
  title=#1,
  colback=gray!10!white,
  colbacktitle=white,
  coltitle=black,
  fonttitle=\bfseries,
  bottomrule=0.5pt,
  toprule=0.5pt,
  leftrule=0.5pt,
  rightrule=0.5pt,
  titlerule=0pt,
  colframe=black,
  boxsep=1pt,left=2pt,right=2pt,top=2pt,bottom=2pt
}

\usepackage{siunitx} 
\usepackage{multirow}
\usepackage{hyperref}

\usepackage{algorithm} 
\usepackage[noend]{algpseudocode} 
\usepackage{adjustbox}


\newcommand{\abacus}[0]{\textsc{Abacus}\xspace}
\newcommand{\parabacus}[0]{\textsc{ParAbacus}\xspace}

\newcounter{enum}
\newenvironment{packed_enum}{
\begin{list}{\textbf{(\arabic{enum})}}{
  \setlength{\itemsep}{0pt}
  \setlength{\parskip}{0pt}
  \setlength{\labelwidth}{-5 pt}
  \setlength{\leftmargin}{0 pt}
  \setlength{\itemindent}{0pt}
  \setlength{\topsep}{0pt}
  \usecounter{enum}}
}{\end{list}} 

\algnewcommand\algorithmicforeach{\textbf{for each}}
\algdef{S}[FOR]{ForEach}[1]{\algorithmicforeach\ #1\ \algorithmicdo}


%

\newcounter{MakisNOC}
\stepcounter{MakisNOC}

\newcounter{ZoiNOC}
\stepcounter{ZoiNOC}

\newcounter{JorgeNOC}
\stepcounter{JorgeNOC}

\newcounter{VaggosNOC}
\stepcounter{VaggosNOC}

\begin{document}

\title{Counting Butterflies in Fully Dynamic Bipartite Graph Streams}

\author{\IEEEauthorblockN{Serafeim Papadias$^{\#}$ \ \ Zoi Kaoudi$^{+}$ \ \ Varun Pandey$^{\#}$ \ \ Jorge-Arnulfo Quian\'e-Ruiz$^{+*}$\thanks{* The paper is dedicated in memory of Jorge; a mentor and a colleague who passed away so unexpectedly in May 2023.} \ \ Volker Markl$^{\#}$ \\ 
\{s.papadias, varun.pandey, volker.markl\}@tu-berlin.de$^{\#}$ \{zoka, joqu\}@itu.dk$^{+}$}
\IEEEauthorblockA{\textit{Technische Universit\"at Berlin$^{\#}$, ITU Copenhagen$^{+}$}
}}







\maketitle

\begin{abstract} 
A bipartite graph extensively models relationships between real-world entities of two different types, such as user-product data in e-commerce. 
Such graph data are inherently becoming more and more streaming, entailing continuous insertions and deletions of edges.  
A butterfly (i.e.,~$2 \times 2$ bi-clique) is the smallest non-trivial cohesive structure that plays a crucial role.
Counting such butterfly patterns in streaming bipartite graphs is a core problem in applications such as dense subgraph discovery and anomaly detection.  
Yet, existing approximate solutions consider insert-only streams and, thus, achieve very low accuracy in fully dynamic bipartite graph streams that involve both insertions and deletions of edges. 
Adapting them to consider deletions is not trivial either, because different sampling schemes and new accuracy analyses are required. 
We propose \abacus, a novel approximate algorithm that counts butterflies in the presence of both insertions and deletions by utilizing sampling. 
We prove that \abacus always delivers unbiased estimates of low variance.  
Furthermore, we extend \abacus and devise a parallel mini-batch variant, namely, \parabacus, which counts butterflies in parallel.  
\parabacus counts butterflies in a load-balanced manner using versioned samples, which results in significant speedup and is thus ideal for critical applications in the streaming environment.  
We evaluate \abacus/\parabacus using a diverse set of real bipartite graphs and assess its performance in terms of accuracy, throughput, and speedup. 
The results indicate that our proposal is the first capable of efficiently providing accurate butterfly counts in the most generic setting, i.e.,~a fully dynamic graph streaming environment that entails both insertions and deletions. 
It does so without sacrificing throughput, and even improves it with the parallel version. 
\end{abstract}

\begin{IEEEkeywords}
butterfly counting, bipartite streaming graphs, fully dynamic streams, edge deletions, approximate estimation
\end{IEEEkeywords}

\sloppypar
\section{Introduction}
\label{sec:introduction}

\noindent Bipartite graphs are a natural fit when it comes to modeling the relationship between two different types of entities in real-world applications~\cite{vldb19-xuemin, vldb18-fraud-alibaba}. 
For instance, Alibaba’s e-commerce platform models relationships between users and products via bipartite graphs~\cite{icde19-keynote-alibaba}. 
These graphs consist of billions of vertices (e.g., products, buyers, and sellers) and hundreds of billions of edges (e.g., representing clicks, orders, and payments). 
Nowadays, real-world bipartite graphs are inherently \textit{streaming}, entailing continuous insertions and deletions of vertices and edges~\cite{arxiv19-vasia} 
For example, Alibaba's user-product interactions are streams of very high velocity:
Reports of customer purchase activities specify that during a heavy period in 2017, 320 PB of log data were generated within only a six-hour period~\cite{tkdd22-sgrapp}. 
Consequently, it is vital to swiftly analyze the huge volume of incoming data and identify underlying trends or patterns in bipartite graph streams in order to gain valuable insights. 

The butterfly is the most basic substructure in bipartite graphs, similar to the triangle in unipartite graphs. 
A butterfly (i.e.,~$2\times2$ biclique) is a complete bipartite subgraph with two vertices belonging to one entity type and two vertices belonging to another entity type. 
Butterfly count is a metric that plays an important role in many applications. 
For instance, it is used to measure the  \textit{butterfly clustering coefficient} in a bipartite graph, which indicates how cohesive the graph is and can highlight how entities are clustered~\cite{lind2005cycles, aksoy2017measuring, opsahl2013triadic, robins2004small}.
This metric is important in many real-world applications, such as: in online recommendation systems to identify similar items~\cite{ijcai18-ahmed, icde19-jia, debs23-siachamis, edbt19-papadias}, cluster users, and enhance collaborative-filtering~\cite{DBLP:conf/ACMicec/MilitaruZ10}); in real-time anomaly detection~\cite{tkdd20-thinkd, tkde21-cas, kdd23-anomaly}; in fraud detection~\cite{vldb18-fraud-alibaba}.
Also, counting butterflies for each edge 
is required for the computation of $k$-bitrusses~\cite{wsdm18-peeling, icde20-wang, vldbj22-wang}, which is used in a varienty of applications, such as~community and spam detection~\cite{icde21-wang, sigmod21-tutorial, fang2020survey, vldb05-gibson, vldb20-kaixis, vldb23-wharf}.

Approaches that exactly count butterflies in static graphs~\cite{bigdata14-rectangle, vldbj22-wang, vldb19-xuemin, kdd18-tirthapura} are prohibitive for streams because they necessitate storing the whole graph in memory and take quadratic time to enumerate butterflies in the worst-case. 
Wang et al.~\cite{vldbj22-wang, vldb19-xuemin} 
devise a vertex-priority-based algorithm that considers both insertions and deletions of edges in a batch-dynamic setting. 
However, their per-batch computation mechanism cannot keep up with the pace of bipartite graph streams and results in stale counts in streaming applications. 
Approximate streaming methods that estimate the butterfly counts also exist~\cite{kdd18-tirthapura, cikm19-fleet, tkde21-cas}.     
However, all of them are strictly focusing on insert-only 
bipartite graph streams, and none of them can support both deletions and insertions of edges. 
This is primarily due to the inability of their sampling algorithms to maintain uniform random samples in presence of deletions. 
As a result, they fail to provide accurate butterfly counts in 
the most general and realistic setting, the fully dynamic one. 
This also directly results in a degradation of the output quality of many algorithms that rely on butterfly counts. 
Consider, for instance, the precision and recall metrics, which are utilized by anomaly detection algorithms to measure the quality of detected anomalies over time. 
Typically, an anomaly in bipartite graph streams appears when a certain number of butterflies that are formed is above some threshold~\cite{tkde21-cas, kdd23-anomaly, ranshous2015anomaly, sstd19-patroumpas}.   
Therefore, precision and recall will degrade significantly if the butterfly counts are maintained inaccurately, which will happen if edge deletions are ignored and not treated accordingly. 
In order to improve the quality of anomaly detection, 
it is vital to address the edge deletions appropriately.  

Counting butterflies in fully dynamic bipartite graph streams is challenging for three main reasons:  
(i) Due to the complexity of the butterfly counting problem and the nature of bipartite graphs, exact algorithms are prohibitive as they require the whole graph to be stored in main memory. 
(ii) An approximate solution is more suitable for achieving high throughput and maintaining a small memory footprint, but should provably deliver unbiased estimations. 
Doing so by using simple sampling techniques that only work for insert-only streams (e.g.,~ reservoir sampling) falls short for a bipartite graph stream that also involves deletions; 
(iii) Devising a parallel algorithm for increased throughput is appealing for graph streams but non-trivial. 
Specifically, all threads should have almost the same workload to not introduce stragglers while at the same time reducing contention among them.

We propose \abacus, which tackles all the above-mentioned challenges by providing accurate estimates for the butterfly counts in bipartite graph streams with deletions. 
Specifically, we use Random Pairing (RP)~\cite{vldbj08-gemulla} to maintain a uniform random sample of bounded size from a fully dynamic graph stream over which we estimate the butterfly counts. 
We prove that our estimates are unbiased and have low variance.  
Importantly, \abacus eliminates the need for extra hashmaps that bookkeep wedges~\cite{vldbj22-wang, kdd18-tirthapura} by refining its butterfly counts via set intersection operations. 
Furthermore, we present a parallel variant of \abacus, called \parabacus, which processes a graph stream in mini-batches~\cite{zaharia2010spark, koliousis2016saber} using all available threads. 
More precisely, we maintain a versioned sample, which incorporates different states of the maintained sample that correspond to different edges in the mini-batch. 
Finally, we conduct the per-edge butterfly counting attributed to each edge in a mini-batch in parallel to enhance throughput.

In summary, we make the following major contributions: 

\begin{packed_enum}
    \item We formalize the problem of counting butterflies in fully dynamic streams, entailing both insertions and deletions of edges (Section~\ref{sec:problem-statement}). 
	
    \item We present \abacus, an approximate algorithm handling bipartite graph streams with both insertions and deletions. 
    We present a set intersection-based process for updating the butterfly counts that makes our algorithm more space-efficient as it alleviates the need for bookkeeping. 
    We refine our estimates by calculating the probability that a butterfly is formed between an edge that arrives and the sample we maintain (Section~\ref{sec:abacus-method}). 
    
    \item 
    We prove that \abacus always maintains unbiased estimates for butterfly counts of low variance. 
    We provide the time and space complexity of our algorithm     (Section~\ref{sec:accuracy-and-complexity}).
	
    \item We present the parallel version of \abacus, namely, \parabacus, which processes the graph stream in mini-batches using all available threads to enhance throughput. 
    Specifically, we maintain sample versions and distribute the counting of per-edge butterflies attributed to each edge in a mini-batch to all available threads (Section~\ref{subsec:parabacus}). 
	
    \item We conduct comprehensive experiments on a variety of real-world bipartite graph workloads. 
    We show that \abacus and \parabacus give both efficient and accurate estimates, which align with our theoretical analysis (Section~\ref{sec:experiments}). 
\end{packed_enum}

\noindent We then discuss related work in Section~\ref{sec:related-work}, where we stress that existing butterfly counting algorithms for graph streams fail to address the above-mentioned challenges attributed to edge deletions. 
We conclude the paper in Section~\ref{sec:conclusions}. 
\section{Problem Statement}
\label{sec:problem-statement}
\noindent Let us first introduce the notation we use throughout the paper (shown in Table~\ref{tab:notation})
and the core definitions necessary to formally define the problem we address. 
We consider undirected and unweighted bipartite graphs without zero-degree vertices, and without duplicate edges.  
We begin by defining a fully dynamic bipartite graph stream as follows: 

\begin{definition} 
\label{def:bipartite-graph-stream} 
A \textbf{fully dynamic bipartite graph stream} $\Pi$ is a sequence of elements ($e^{(1)}, e^{(2)},\dots$). 
Let $G^{(t)} = (V^{(t)}, E^{(t)})$ be a bipartite graph that contains all edges $E^{(t)}$ that appear in the sequence $\Pi$ up to time $t$ (inclusive), and let their corresponding set of vertices $V^{(t)} = L^{(t)} \cup R^{(t)}$, which is separated into two disjoint partitions; a \textit{left} one, $L^{(t)}$, and a \textit{right} one, $R^{(t)}$, where $L^{(t)} \cap R^{(t)} = \emptyset$. 
It holds that $E^{(t)} \subseteq L^{(t)} \times R^{(t)}$. 
Also, $N_{v}^{(t)} = \{w \in V^{(t)} | (v,w) \in E^{(t)}\}$ denotes the set of neighbours of a vertex $v \in V^{(t)}$. 
For each discrete timestamp $t \geq 0$, let $e^{(t)} = (\{u^{(t)},v^{(t)}\},\delta)$ be the $t^{th}$ element in the sequence $\Pi$, where $\{u^{(t)},v^{(t)}\}$ is the actual edge and $\delta \in \{+,-\}$ denotes the change in $G$ at time $t$, i.e.,~whether the edge is inserted or deleted. 
More precisely, $(\{u^{(t)},v^{(t)}\},+)$ signifies that the edge did not exist up to time $t-1$, i.e.,~$(\{u^{(t-1)},v^{(t-1)}\} \notin E$, but will be inserted at time $t$, i.e.,~$(\{u^{(t)},v^{(t)}\} \in E$. 
Similarly, $\{u^{(t)},v^{(t)}\},-)$ indicates that an existing edge $\{u^{(t)},v^{(t)}\} \in E$ is about to get deleted. 
\end{definition}

\begin{figure}[t]
\centering
    \begin{subfigure}[t]{0.49\columnwidth}
        \includegraphics[width=\textwidth]{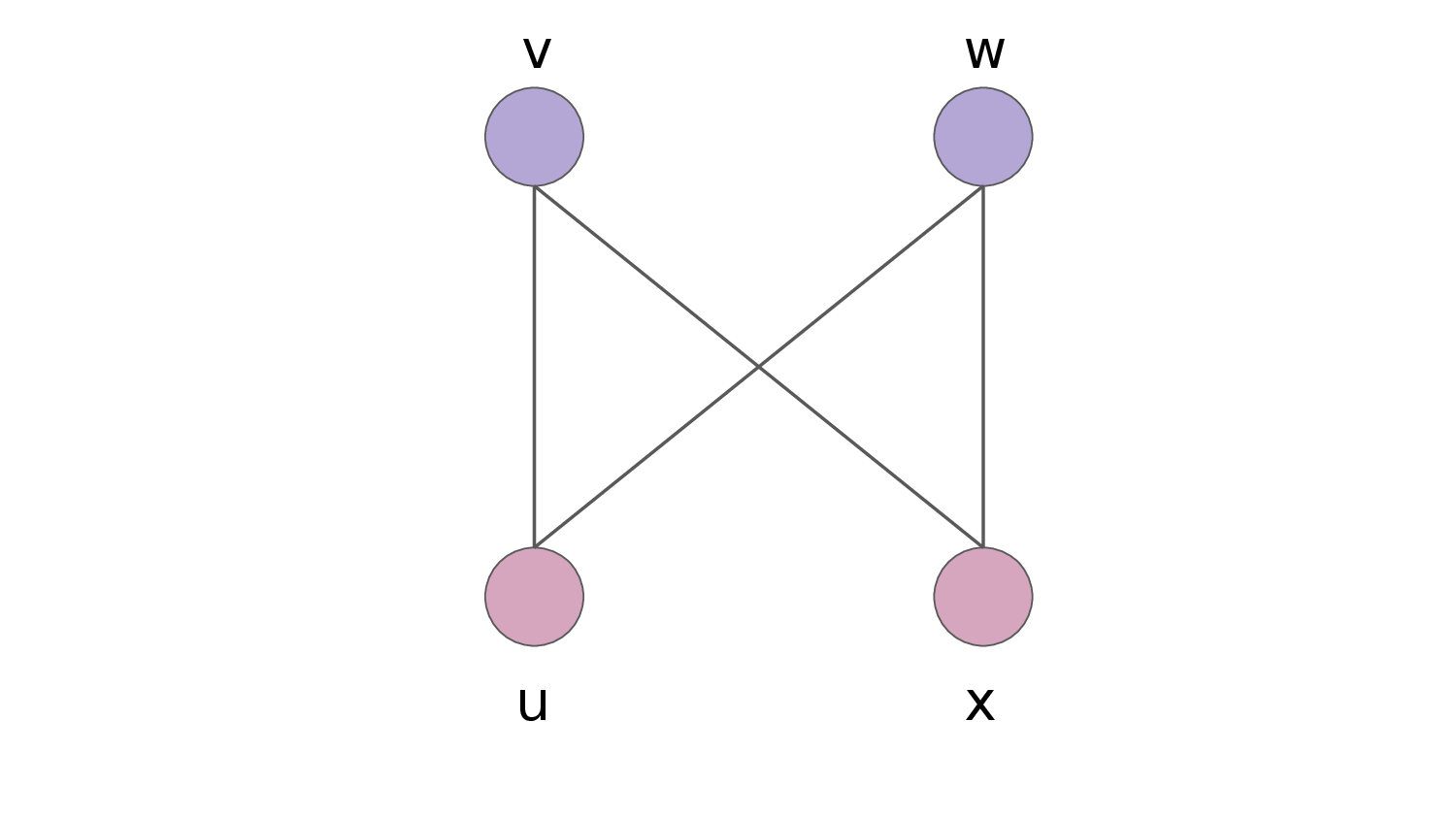}    
        \caption{Butterfly}
        \label{fig:butterfly}
    \end{subfigure}
    \begin{subfigure}[t]{0.49\columnwidth}
        \includegraphics[width=\textwidth]{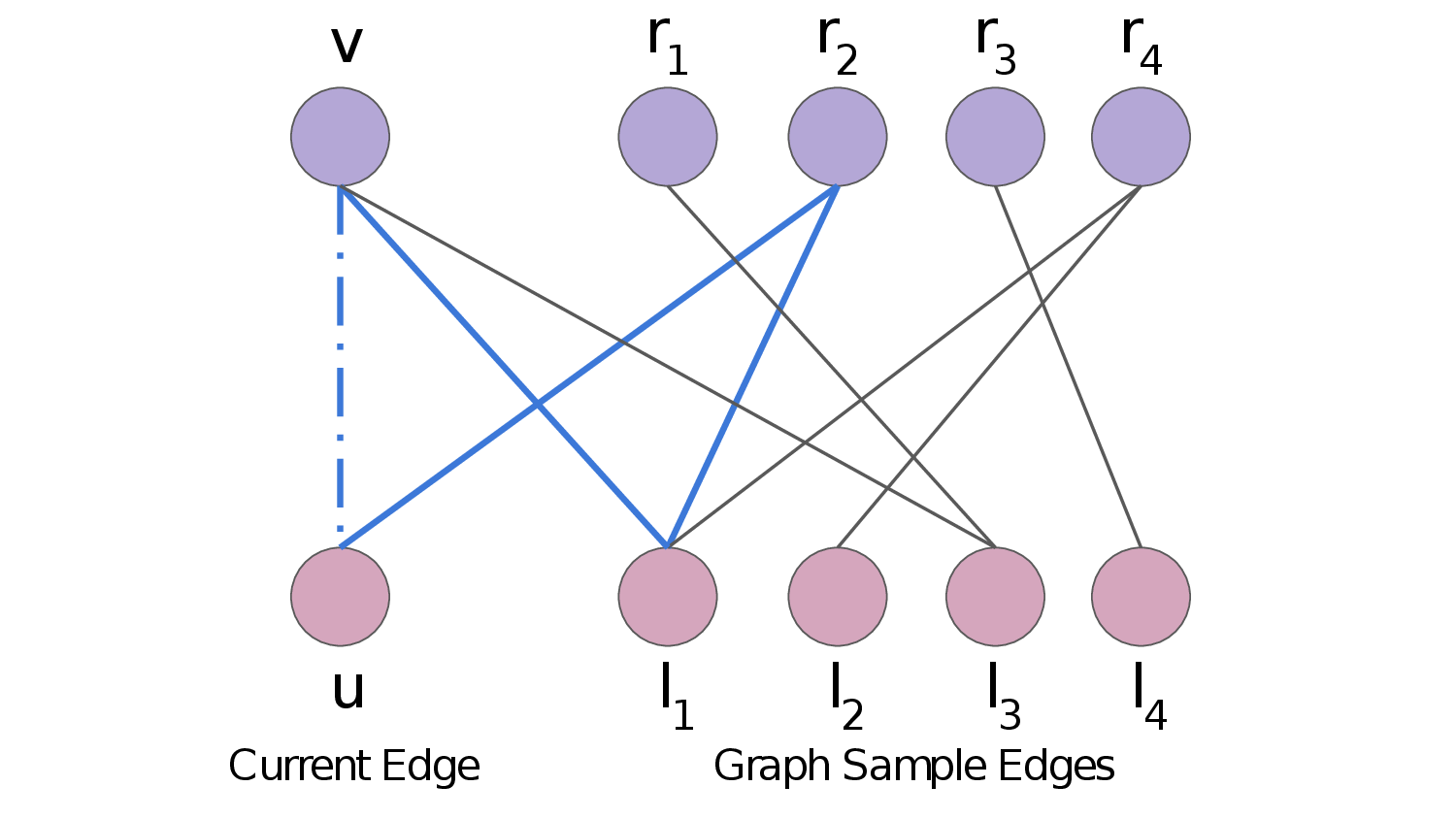}    
        \caption{Running Example}
        \label{fig:abacus-running-example}
    \end{subfigure} 
    \caption{{\bf (a)}~Butterfly structure. 
        {\bf (b)}~Running example showing the graph sample $\mathcal{S}$, an incoming edge $\{u, v\}$, and the butterflies that $\{u, v\}$ potentially forms with the edges in $\mathcal{S}$.} 
    \label{fig:abacus-example} 
\end{figure}

\noindent Implicitly, we assume that only new edges can be inserted (i.e.,~multigraphs with parallel edges are out of scope) and only edges that already exist can be deleted. 
Also, vertices that end up with degree zero, are deleted from $V^{(t)}, \forall t$. 
A butterfly is a complete $2\times2$ bipartite subgraph, where two vertices of one bipartition are connected with two vertices of the other bipartition.
For instance, a butterfly subgraph is depicted in Figure~\ref{fig:butterfly}. 
Let us now formally define a butterfly pattern that appears in such bipartite graphs as follows:

\begin{definition} 
\label{def:butterfly} 
Given a bipartite graph $G^{(t)}$ and four vertices $u, v, w, x \in V^{(t)}$, where $u,x \in L^{(t)}$ and $v, w \in R^{(t)}$, a butterfly is the induced subgraph $\{u,v,w,x\}$
that is formed by the edges $(u,v),(u,w),(v,x),(w,x) \in E^{(t)}$. 
\end{definition}

\begin{table}[t!] 
        \caption{Notations.} 
        \begin{adjustbox}{left}
            \hspace*{-0.3cm} 
            \begin{tabular}{l|c}
                \hline
                
                $\Pi$ & fully dynamic bipartite graph stream \\
    
                \hline
                
                $G^{(t)} = (V^{(t)}, E^{(t)})$ & bipartite graph at time $t$ 
                \\
                \hline
                
                $V^{(t)} = L^{(t)} \cup R^{(t)}$ & set of vertices (left and right partition) \\
                \hline
    
                $\delta$ & edge insertion $(\delta = +)$ or deletion $(\delta = -)$ \\
                \hline
    
                $\{u^{(t)},v^{(t)}\}$ & edge between two vertices $u$ and $v$ at time $t$ \\
                \hline
                
                $e^{(t)} = (\{u^{(t)},v^{(t)}\},\delta)$ & an element of $\Pi$ at time $t$ \\
                \hline
                
                $\{u^{(t)},v^{(t)},w^{(t)},x^{(t)}\}$ & butterfly formed by vertices $u,v,w,x$ at time $t$ \\
                \hline
                
                $B^{(t)}$ & set of butterflies in $G^{(t)}$ at time $t$ \\ 
                \hline 
                
                $\mathcal{S}^{(t)}$ & sample of the stream $\Pi$ at time $t$ \\
                \hline
                
                $N_u^{(t)}$ & set of neighbours of vertex $u \in V^{(t)}$ in $G^{(t)}$  \\
                \hline 
    
                $N_u^{\mathcal{S},(t)}$ & set of neighbours of vertex $u$ in the sample $S^{(t)}$  \\
                \hline 
                
                $d_u^{(t)}$ & degree of vertex $u \in V^{(t)}$  \\
                \hline 
    
                $d^{\mathcal{S},(t)}_u$ & degree of vertex $u$ in the sample $\mathcal{S}^{(t)}$  \\
                \hline 
                
                $c$ & 
                butterfly count estimate \\ 
                \hline
                
                $c_b$ & 
                \#uncompensated (``bad'') deletions $\in \mathcal{S}^{(t)}$ \\
                \hline    
                
                $c_g$ & \#uncompensated (``good'') deletions $\notin \mathcal{S}^{(t)}$ \\
                \hline    
                
                $k$ & memory budget 
                (max \#edges in $\mathcal{S}$) \\
                \hline 
                
                $\mathcal{C}^{(t)}$ & set of created butterflies up to time $t$ \\
                \hline 
                
                $\mathcal{D}^{(t)}$ & set of deleted butterflies up to time $t$ \\
                \hline
    
                $M$ & number of edges in a mini-batch \\
                \hline
                
    
                $\alpha$ & percentage of edges that are deletions \\
                \hline
                
                \hline
            \end{tabular} 
            \label{tab:notation}
        \end{adjustbox}
\end{table}

We now define the problem of estimating the number of butterflies in a fully dynamic graph stream $\Pi$ under infinite window semantics~\cite{book16-minos}. 
Let $B^{(t)}$ denote the set of all butterflies in a bipartite graph $G^{(t)}$.  
We, thus, formally define the problem we focus on, as follows:

\begin{problem} 
\label{def:problem-statement}
Given a fully dynamic bipartite graph stream ($e^{(1)}, e^{(2)},\dots$) comprised of a sequence of edge insertions and deletions in a bipartite graph $G$, we aim to maintain butterfly count estimates $|B^{(t)}|$, using bounded memory, such that the estimates are unbiased and the errors are minimized. 
\end{problem} 

\noindent Note that we assume the traditional data stream model where the changes in the input stream can be accessed only once in the given order unless they are explicitly stored in memory.  
\section{\abacus}
\label{sec:abacus-method} 

\noindent We now present \abacus, an algorithm designed to facilitate the efficient and accurate counting of butterflies in fully dynamic bipartite graph streams.
\abacus employs sampling to maintain a subset of the edges of bounded size, and estimates butterfly counts through the maintained sample. 
First, we give an overview of the general workflow of \abacus. 
Subsequently, we present the sampling scheme \abacus employs for maintaining a uniform random sample. 
Finally, we illustrate the exact methodology we use to refine butterfly estimates.

\begin{algorithm} [t] 
	\small
    \caption{\textsc{\abacus}}
    \label{alg:abacus}
 
    \begin{algorithmic}[1] 
        \State \textbf{Input:} fully dynamic input bipartite graph stream $\Pi = (e^{(1)}, e^{(2)},\dots)$, graph sample $\mathcal{S}$, memory budget $k \geq 2$
     
        \State \textbf{Output:} 
        butterfly count estimate $c$  
        \State $\mathcal{S} \gets \emptyset$, $|E| \gets 0$, $c \gets 0$, $c_b \gets 0$, $c_g \gets 0$

        \ForEach {element $e^{(t)} = (\{u,v\}, \delta)$ in $\Pi$}
        \State// Update the Butterfly Count
        \State $increment$ = $\frac{sgn(\delta)}{Pr(|E|, c_b,c_g)}$
        \Comment{$sgn(\delta)$ 
        is the sign of $\delta$}
        
        \If{$\sum_{x \in S_u}d_x < \sum_{x \in S_v}d_x$} choose $v$ 
        \Comment{Else, 
        $u$}
        \EndIf 
        \ForEach{vertex $w \in N^{\mathcal{S}}_u \setminus v$}
        \Comment{$u$'s neighbors $\in \mathcal{S}$}
        \State $\mathcal{CN}$ = $\mathtt{SetIntersection}(N^{\mathcal{S}}_w, N^{\mathcal{S}}_v)$ 
        \ForEach{vertex $x \in \mathcal{CN}$}
        \State 
        $c$ 
        += $increment$
        \EndFor 
        \EndFor 
        \State// Update the Sample
        \If{$\delta = +$} $\mathtt{InsertToSample}$($\{u,v\}$, $k$)
        \ElsIf{$\delta = -$} $\mathtt{DeleteFromSample}$($\{u,v\}$)
        \EndIf
        \EndFor
        
    \end{algorithmic} 
\normalfont
\end{algorithm}

\subsection{Overview}
\label{subsec:overview}
\noindent Let us now elaborate on \abacus's main workflow as shown in Algorithm~\ref{alg:abacus} (lines 4-14). 
It ingests a fully dynamic graph stream, element by element, and maintains a uniform random sample $\mathcal{S}$ of bounded size $k$ by utilizing the Random Pairing (RP)~\cite{vldbj08-gemulla}. 
Note that the memory budget $k$ is the maximum possible sample size in \abacus, and thus, we use the terms \textit{memory budget} and \textit{sample size} interchangeably throughout the paper. 
The RP sampling scheme is suitable for streams that contain both insertions and deletions and ensures that the sample is always uniform. 
Uniformity is a property that enables us to extrapolate our estimations to the whole graph stream. 
Under deletions, the usual sampling schemes, such as reservoir sampling~\cite{vitter1985random}, do not guarantee uniformity. 
The processing happens per element (line~4), and once processed, \abacus evicts it from the main memory. 
On a high level, for each incoming element, \abacus first refines the maintained butterfly count estimates and then updates the sample $\mathcal{S}$. 
Specifically, for each incoming edge (either edge insertion or deletion), \abacus finds all the butterflies that the edge forms with the edges in the sample $\mathcal{S}$ (lines~8-11) and updates the butterfly count (lines~6, 11). 
Due to the inherent complexity of the butterfly structure itself, it is challenging and important to spot the formed butterflies efficiently. 
One must also prove that the estimates provided 
are unbiased and of low variance, and thus, accurate and robust. 
Afterwards, \abacus updates the sample $\mathcal{S}$ (lines~13-14) by essentially deciding whether to insert the edge $\{u,v\}$ to $\mathcal{S}$ (if $\delta = +$) or delete it from $\mathcal{S}$ (if $\delta = -$), where $\delta$ indicates an edge insertion or deletion.

\subsection{Counting Butterflies per Edge}
\label{subsec:per-edge-butterfly-counts}
\noindent We now explain how \abacus finds the number of butterflies an incoming edge forms with the edges in the sample, as shown in Algorithm~\ref{alg:abacus} (lines~4-11).  
As we refine our butterfly counts using every incoming edge,  irrespective of whether the edge is later included in the sample, the operation of per-edge butterfly counting must be as efficient as possible. 

For each incoming edge $\{u, v\}$, \abacus chooses to compute the butterflies formed using the vertices on $u$'s side that are neighbors of $v$, or using the vertices on $v$'s side that are neighbors of $u$ (line~7). 
\abacus chooses to explore vertices on the side of the incoming edge's endpoint that has the smallest cumulative degree. 
This is a common heuristic~\cite{kdd18-tirthapura, vldbj22-wang}, and allows for choosing the cheapest side to conduct the counting. 
In the context of \abacus, choosing the cheapest side leads to conducting cheaper set intersections using vertices that belong to the bipartition with the smaller cumulative degree. 
The fact that the complexity of a set intersection operation between two sets is the size of the smallest set, leads to improved performance.  
As we see in our running example in Figure~\ref{fig:abacus-running-example}, we choose the side of vertex $v$ or equivalently we conduct the counting using the neighbors of $u$ in $\mathcal{S}$ as they have the smallest cumulative degree. 
Specifically, in our example $u$ has only one neighbor in the sample $\mathcal{S}$, i.e.,~$r_2$ with degree $2$, whereas $v$ has two neighbors in $\mathcal{S}$, i.e.,~ $l_1$ and $l_2$ with cumulative degree equal to $5$. 

Consider that \abacus chooses $v$ (line~7). 
If so, it explores every neighbor $w$ of $u$ in the sample $\mathcal{S}$ (i.e.,~excluding $v$) (line~8). 
Subsequently, \abacus finds the common neighbors, $\mathcal{CN}$, as the result of the set intersection between the set of neighbors of $v$ and the set of neighbors of $w$ (line~9). 
The result, $\mathcal{CN}$, of a set intersection (if not empty) contains all the vertices that serve as the fourth vertex $x$ forming a butterfly along with the edge's endpoints $u$, $v$, and the current vertex $w$ that \abacus explores amongst the neighbors of $u$ (line~10-11). 
In case the result of the set intersection is empty, this means that no butterfly that contains the incoming edge $\{u,v\}$ is formed through the vertex $w$. 
In our running example in Figure~\ref{fig:abacus-running-example}, since \abacus chooses $v$, the only vertex that is neighbor of $u$ in $\mathcal{S}$ is $r_2$, belonging to the right bipartition (shown in the upper part). 
Vertex $v$ has neighbours 
$N^{\mathcal{S}}_v = \{u,l_1,l_2\}$ and vertex $r_2$ has neighbors $N^{\mathcal{S}}_{r_2} = \{u, l_1\}$. 
We exclude vertex $u$ from the corresponding neighboring sets as $u \notin \mathcal{S}$, and we observe that their common neighbor, $l_1$, indicates that the butterfly $\{u,v,l_1,r_2\}$ has been formed. 
Finally, \abacus adjusts the butterfly count for each discovered butterfly with a certain \textit{increment}  (line~11), as we describe in the sequel.

\begin{algorithm} [t] 
	\small
    \caption{\textsc{Random Pairing~\cite{vldbj08-gemulla}}}
    \label{alg:random-pairing}
 
    \begin{algorithmic}[1] 
     
	    
        \Procedure{$\mathtt{InsertToSample}$}{$\{u,v\}$, $k$} 
        \State $|E| \gets |E| + 1$
        \If{$c_b + c_g = 0$}
        \If{$|\mathcal{S}| < k$} $\mathcal{S} \gets \mathcal{S} \cup \{\{u,v\}\}$
        \ElsIf{$Bernoulli(\frac{k}{|E|}) = 1$}
        \State replace a random edge in $\mathcal{S}$ with $\{u,v\}$
        \EndIf
        \ElsIf{$Bernoulli(\frac{c_b}{c_b+c_g}) = 1$}
        \State $\mathcal{S} \gets \mathcal{S} \cup \{\{u,v\}\}$
        \State $c_b \gets c_b - 1$
        \Else \textbf{ }$c_g \gets c_g - 1$
        \EndIf
        \EndProcedure
	    
        \Procedure{$\mathtt{DeleteFromSample}$}{$\{u,v\}$}
        \State $|E| \gets |E| - 1$
        \If{$\{u,v\} \in \mathcal{S}$} 
        \State $\mathcal{S} \gets \mathcal{S} \setminus \{\{u,v\}\}$
        \State $c_b \gets c_b + 1$
        \Else \textbf{ }$c_g \gets c_g + 1$
        \EndIf
        \EndProcedure
        
    \end{algorithmic} 
\normalfont
\end{algorithm}  

\subsection{Random Pairing for Uniform Samples}
\label{subsec:abacus-sampling-mechanism}  
\noindent Intuitively, the larger the size of the sample $\mathcal{S}$ that \abacus maintains, the more accurate the butterfly count estimations. 
This is in consistence with what has been demonstrated for triangle count estimation methods on fully dynamic graph streams~\cite{tkdd20-thinkd, tkdd17-triest}. 
Therefore, to minimize the information loss, \abacus strives to keep as many elements in the sample $\mathcal{S}$ as possible within a predefined memory budget $k \geq 2$. 
Random Pairing (RP)~\cite{vldbj08-gemulla} is a sampling scheme that always maintains a uniform random sample containing at most $k$ edges, given a fixed memory budget $k$ and a fully dynamic bipartite graph stream. 
Initially, the sample $\mathcal{S}$ as well as the bipartite graph stream are empty.  
Note that we initialize the compensation counters, $c_b$ and $c_g$, to zero (Alg.~\ref{alg:abacus}, line~3). 
Assuming that there is a set $E$ of edges in the input bipartite stream that have not yet been deleted, we now describe the core functionality for inserting or removing an edge to or from $S$ that \abacus maintains, as illustrated in Algorithm~\ref{alg:random-pairing}. 
More precisely, when an edge deletion appears (lines~12-16), if the edge is in $\mathcal{S}$, then RP increases the $c_b$ (line~15); otherwise, it increases $c_g$ (line~16).  
Intuitively, the counters $c_b$ and $c_g$ signify the number of deletions that need ``compensation'' from upcoming insertions. 
On the other hand, when an edge insertion appears (lines~2-10) and there are no deletions to compensate for, i.e.,~$c_b+c_g = 0$ (line~17), \abacus processes the upcoming edge insertion as in reservoir sampling~\cite{vitter1985random} (lines~4-6).  
In particular, if \abacus has not exhausted the memory budget yet, i.e.,~$|\mathcal{S}| < k$, then we append the newly arrived edge to $\mathcal{S}$ (line~4); else, we replace a random edge in $\mathcal{S}$ with the new edge with a probability $k/|E|$ (lines~5-6).  
In case the sum of the compensation counters is not zero, then we consider them when calculating the probability of replacing an edge from $\mathcal{S}$ with the new one (line~7), and update the counters accordingly (lines~9-10). 
\noindent Note that the uniformity of $\mathcal{S}$ allows for exactly calculating the discovery probability of each butterfly in a deterministic way.

\subsection{Butterfly Count Update Mechanism}
\label{subsec:abacus-update-mechanism} 
\noindent We now elaborate on how \abacus updates its butterfly count estimates. 
As we described in Section~\ref{subsec:per-edge-butterfly-counts}, when an edge (insertion or deletion) arrives, \abacus first finds the butterflies that the edge forms with the edges in $\mathcal{S}$. 
Here, we show how \abacus utilizes the spotted butterflies to update the butterfly count estimate. 
We explain \textit{how much} \abacus modifies its butterfly count estimate, such that it always remains unbiased. 

In Algorithm~\ref{alg:abacus}, the specific increment amount with which \abacus refines the maintained butterfly count is important for providing accurate estimations (lines~6, 11). 
More precisely, each incoming edge contributes to the creation of some new butterflies if it is an insertion, or causes the deletion of some existing butterflies if it is a deletion. 
We can reason about the created or deleted butterflies with the edges that exist in the sample, yet for the butterfly counts in the whole graph stream, we can only extrapolate using the actual counts we gain through the sample.    
Furthermore, the fact that we are maintaining a uniform random sample implies that the created or deleted butterflies are discovered with a certain probability, which we can exactly calculate. 
Specifically, each time an element $e^{(t)} = (\{u,v\}, \delta)$ arrives (Algorithm~\ref{alg:abacus}, line~4), every created or deleted butterfly $\{u,v,w,x\}$, where $u,w \in L^{(t)}$ and $v,x \in R^{(t)}$, is successfully discovered (lines~8-11) if and only if three specific edges exist in the sample $\mathcal{S}$, namely, the edges $\{u,x\}$, $\{w,x\}$, and $\{v,w\}$. 
Assuming that $y = \min(k, |E^{(t)}|+c_g+c_b)$, which is the size of the sample $\mathcal{S}$, we prove that the above-mentioned butterfly discovery probability through the sample is as follows: 
\begin{equation}
    \label{eq:reciprocal}
    \hspace*{-1.0cm} 
    Pr(|E^{(t)}|,c_b,c_g) = \frac{y}{T} \cdot \frac{y-1}{T-1} \cdot \frac{y-2}{T-2}\text{, with } T = |E^{(t)}|+c_b+c_g
\end{equation} 
\noindent where $c_g$, $c_b$ are compensation counters for the edge deletions in the RP sampling, and $E^{(t)}$ 
is the set of edges that remain in the input bipartite streaming graph (without being deleted) after the $t$-th element of the stream is processed. 
For each butterfly that \abacus discovers using the sample, it updates the corresponding butterfly estimates with the reciprocal of the probability that the butterfly is discovered (line~6). 
Utilizing the reciprocal of the discovery probability as the amount of change per butterfly discovered makes the expected amount of changes in the estimated butterfly counts of \abacus exactly one, and thus, enables us to provide unbiased estimates. 
Therefore, for each incoming edge, \abacus calculates the reciprocal \textit{increment} based on Equation~\ref{eq:reciprocal} and uses it to refine the count estimates for the butterflies that it discovers with the incoming edge (line~6,11).

\section{Accuracy and Complexity}
\label{sec:accuracy-and-complexity}
\noindent We now study the accuracy of our algorithm and prove that \abacus consistently maintains unbiased estimates of low variance for the butterfly count. 
We then present the time and space complexity of our algorithm.

\subsection{Accuracy Analysis}
\label{subsec:accuracy-analysis}

\noindent Before proving the unbiasedness of \abacus, we first provide the following lemma for the butterfly discovery probability.

\begin{lemma}[Butterfly Discovery Probability]
\label{lemma:discovery-probability-butterflies} 
In \abacus, any three distinct edges in the bipartite graph $G^{(t)} = (V^{(t)}, E^{(t)})$ are sampled with the probability shown in Equation~\ref{eq:reciprocal}. 
Therefore, assuming that $p^{(t)}$ is the probability of the Equation~\ref{eq:reciprocal} and $S^{(t)}$ is the sample of the bipartite graph, respectively, after the $t$-th element $e^{(t)}$ is processed by \abacus (Algorithm~\ref{alg:abacus}), then the following holds:   
\begin{align}
\label{eq:discovery-probability-butterflies}
    Pr(\{u,v\} &\in \mathcal{S}^{(t)} \cap \{w,x\} \in \mathcal{S}^{(t)} \cap \{y,z\} \in \mathcal{S}^{(t)}) = p^{(t)}, 
    \nonumber \\ 
    \forall &t \geq 1, \forall \{u,v\} \ne \{w,x\} \ne \{y,z\} \in E^{(t)} 
\end{align}
\end{lemma}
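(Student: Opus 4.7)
The plan is to reduce the lemma to the uniformity guarantee of Random Pairing (RP) and a standard hypergeometric calculation. The key property I would invoke, established by Gemulla et al., is that after processing $e^{(t)}$, the sample $\mathcal{S}^{(t)}$ is distributed as a uniformly random size-$y$ subset drawn without replacement from an effective population of $T = |E^{(t)}| + c_b + c_g$ items, where $y = \min(k, T)$. Equivalently, for every subset $A \subseteq E^{(t)}$ of size at most $y$, one has $\Pr(A \subseteq \mathcal{S}^{(t)}) = \binom{T-|A|}{y-|A|}/\binom{T}{y}$.

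Given this uniformity invariant, the remainder is routine. For any fixed triple of distinct edges in $E^{(t)}$, I would specialize $|A|=3$ and simplify:
\begin{equation*}
\frac{\binom{T-3}{y-3}}{\binom{T}{y}} \;=\; \frac{y}{T}\cdot\frac{y-1}{T-1}\cdot\frac{y-2}{T-2},
\end{equation*}
which is exactly $p^{(t)}$ from Equation~\ref{eq:reciprocal}. Before this step I would note the edge case $y < 3$: in that regime the sample cannot contain three distinct edges, so both sides are zero and the claim still holds.

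To make the argument self-contained rather than purely cited, I would precede this calculation with a short induction on $t$ establishing the RP uniformity invariant. The base case ($t=0$, $\mathcal{S}^{(0)}=\emptyset$) is vacuous. The inductive step splits into four branches mirroring Algorithm~\ref{alg:random-pairing}: a deletion of an edge inside $\mathcal{S}$, a deletion of one outside $\mathcal{S}$, an insertion when $c_b+c_g=0$ (pure reservoir behavior), and an insertion when $c_b+c_g>0$ (the compensation branch). In each branch one conditions on the identity of the incoming edge and the previous sample, then applies the law of total probability to confirm that every size-$y$ subset of the updated effective population has marginal probability $1/\binom{T}{y}$.

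The main obstacle is the compensation branch. When an insertion arrives with $c_b+c_g>0$, the Bernoulli coin flip with probability $c_b/(c_b+c_g)$ must restore uniformity even though the counters $c_b$ and $c_g$ encode an asymmetric split of past uncompensated deletions (those that were versus were not in $\mathcal{S}$). The subtlety is to verify that this asymmetric bookkeeping precisely cancels the asymmetry introduced by earlier in- versus out-of-sample deletions, so that, after the coin flip, every subset of size $y$ of the new effective population is again equally likely. Once this invariant is inductively verified, the lemma follows immediately by specializing to $|A|=3$.
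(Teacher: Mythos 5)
Your proposal is correct and takes essentially the same route as the paper: both arguments reduce the claim to the uniformity of the Random Pairing sample over the effective population of size $T=|E^{(t)}|+c_b+c_g$ and then compute the inclusion probability of three fixed edges, the paper via the chain rule $P(e_1)P(e_2\mid e_1)P(e_3\mid e_1,e_2)=\frac{y}{T}\cdot\frac{y-1}{T-1}\cdot\frac{y-2}{T-2}$ and you via the equivalent ratio $\binom{T-3}{y-3}/\binom{T}{y}$. Your treatment of the $y<3$ edge case and the sketched induction verifying the RP invariant are refinements beyond the paper's proof, which simply asserts uniformity and relies on the cited Gemulla et al.\ result.
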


\begin{proof}[Proof] 
See Appendix~\ref{proof:discovery-probability-butterflies}.
\end{proof}

\noindent We now formally prove that \abacus maintains unbiased butterfly count estimates as stated in the following theorem:

\begin{theorem}[Unbiasedness]
\label{theorem:unbiasedness}
\abacus provides unbiased butterfly count estimates at any point in time. 
Specifically, for Algorithm~\ref{alg:abacus} it holds:
\begin{align}
\label{eq:unbiasedness}
    \mathop{\mathbb{E}}(c^{(t)}) &= |B^{(t)}|, \forall t \geq 1 
\end{align}
\noindent where $|B^{(t)}|$ is the true butterfly count at time $t$. 
\end{theorem}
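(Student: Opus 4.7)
The plan is to proceed by induction on $t$, leveraging Lemma \ref{lemma:discovery-probability-butterflies} together with linearity of expectation. The base case $t=0$ is immediate: Algorithm \ref{alg:abacus} initializes $c^{(0)} = 0$, and no butterflies can exist in the empty graph, so $|B^{(0)}| = 0$. For the inductive step, I would assume $\mathbb{E}[c^{(t-1)}] = |B^{(t-1)}|$ and write $c^{(t)} = c^{(t-1)} + \Delta^{(t)}$, where $\Delta^{(t)}$ denotes the net change applied by the per-edge counting routine (lines 5--12) at time $t$. Letting $\mathcal{F}^{(t-1)}$ denote the $\sigma$-algebra encoding all random choices made by \abacus through time $t-1$ (including the sample $\mathcal{S}^{(t-1)}$), the tower rule reduces the goal to showing $\mathbb{E}[\Delta^{(t)} \mid \mathcal{F}^{(t-1)}] = |B^{(t)}| - |B^{(t-1)}|$, since the right-hand side is deterministic given the stream.

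To establish this conditional identity I would split on the sign of $\delta$. Suppose $e^{(t)} = (\{u,v\},+)$. The butterflies newly created at time $t$ are exactly those $b = \{u,v,w,x\}$ whose three remaining edges $\{u,x\}$, $\{w,x\}$, $\{v,w\}$ all lie in $E^{(t-1)}$; call this set $\mathcal{C}^{(t)}$, so that $|B^{(t)}| - |B^{(t-1)}| = |\mathcal{C}^{(t)}|$. The set-intersection procedure adds $+1/p^{(t)}$ to $c$ for each $b \in \mathcal{C}^{(t)}$ whose three remaining edges simultaneously belong to the pre-update sample $\mathcal{S}^{(t-1)}$. Lemma \ref{lemma:discovery-probability-butterflies} states that this joint membership event has probability exactly $p^{(t)}$, so the expected contribution of each $b \in \mathcal{C}^{(t)}$ is $p^{(t)} \cdot (1/p^{(t)}) = 1$. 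Linearity of expectation then yields $\mathbb{E}[\Delta^{(t)} \mid \mathcal{F}^{(t-1)}] = |\mathcal{C}^{(t)}|$, as required. The deletion case is completely symmetric: for $\delta = -$, the destroyed butterflies form the set $\mathcal{D}^{(t)}$ with $|B^{(t)}| - |B^{(t-1)}| = -|\mathcal{D}^{(t)}|$, each contributing $-1/p^{(t)}$ upon discovery and hence expected contribution $-1$.

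The main obstacle is making sure Lemma \ref{lemma:discovery-probability-butterflies} is invoked at the correct state of the bookkeeping variables. The probability $p^{(t)}$ used in the increment is computed from $|E|$, $c_b$, $c_g$ as they stand \emph{before} the sample-maintenance call in lines 13--14, which corresponds exactly to the distribution of $\mathcal{S}^{(t-1)}$ over $E^{(t-1)}$ to which the lemma applies. In the deletion case one must check that the three companion edges are still in $G^{(t-1)}$ (only the incoming edge is being removed), so they qualify as three distinct edges of $E^{(t-1)}$ and the lemma indeed gives the discovery probability for them. A final point to verify is that the events ``the three companion edges of $b$ lie in $\mathcal{S}^{(t-1)}$'' are $\mathcal{F}^{(t-1)}$-conditioned events whose probability is exactly $p^{(t)}$ by the uniformity property of Random Pairing, which is what the earlier lemma establishes; with this in hand, the induction closes cleanly.
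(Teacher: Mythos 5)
Your proposal is correct and follows essentially the same route as the paper: both arguments reduce to the observation that, by Lemma~\ref{lemma:discovery-probability-butterflies}, each created (resp.\ deleted) butterfly is discovered with probability $p^{(t)}$ and then credited $+1/p^{(t)}$ (resp.\ $-1/p^{(t)}$), so its expected contribution is exactly $+1$ (resp.\ $-1$), and linearity of expectation gives $|\mathcal{C}^{(t)}|-|\mathcal{D}^{(t)}|=|B^{(t)}|$. The paper simply writes $c^{(t)}$ as one global sum of per-butterfly indicator-weighted increments rather than unrolling it as an induction with the tower rule, but the decomposition and the key lemma are identical.
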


\begin{proof}[Proof]
See Appendix~\ref{appendix:proof-unbiasedness}.
\end{proof}

\begin{theorem}[Variance]
\label{theorem:variance} 
\abacus provides estimates of bounded variance at any point in time. Specifically, for Algorithm~\ref{alg:abacus} it holds: 
\begin{align*} 
\hspace{-0.3cm}
    Var[c] = \gamma \mathbb{E}[c] - \mathbb{E}[c]^2 + 2\gamma^2 (y_1 \frac{\binom{|E|-8}{k-8}}{\binom{|E|}{k}} + y_2 \frac{\binom{|E|-7}{k-7}}{\binom{|E|}{k}} + y_3 \frac{\binom{|E|-6}{k-6}}{\binom{|E|}{k}} ) 
\end{align*}
\noindent where 
$y_1, y_2, y_3$ indicate how many pairs of butterflies that share $0$, $1$, and $2$ edges, respectively, exist in the graph at the current time $t$ (we omit time notation for simplicity), 
$c$ 
is the butterfly count estimation of \abacus, whose expected value equals the ground truth 
butterfly count, $|E|$ is the number of valid edges in the graph stream that have not yet been deleted,  $k$ is the memory budget 
of $\mathcal{S}$, and $\gamma = \binom{E}{k} / \binom{E-4}{k-4}$. 
A \textit{tight} upper bound of the variance is:  
\begin{align*}  
    Var[c] \leq \gamma \mathbb{E}[c] + 2 \gamma^2 \binom{E[c]}{2} \times \frac{\binom{|E|-6}{k-6}}{\binom{|E|}{k}} - \mathbb{E}[c]^2 
\end{align*} 
\end{theorem}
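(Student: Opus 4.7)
The plan is to write $c = \sum_{b \in B^{(t)}} X_b$ as a sum of per-butterfly contributions and apply the standard decomposition $Var[c] = \sum_b Var[X_b] + \sum_{b \ne b'} Cov(X_b, X_{b'})$. Under the Random-Pairing uniformity guarantee (an extension of Lemma~\ref{lemma:discovery-probability-butterflies} from three to four edges), each butterfly contributes either $\gamma = \binom{|E|}{k}/\binom{|E|-4}{k-4}$ or $0$ to $c$, with $\Pr[X_b = \gamma] = 1/\gamma$ and $\mathbb{E}[X_b] = 1$; hence $Var[X_b] = \gamma - 1$ and the diagonal sum equals $\gamma |B| - |B|$.

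For the off-diagonal covariances, $Cov(X_b, X_{b'}) = \gamma^2 \bigl( \Pr[X_b X_{b'} \ne 0] - 1/\gamma^2 \bigr)$. Two distinct butterflies can share $0$, $1$, or $2$ edges; sharing three is impossible because any three edges of a butterfly uniquely determine the fourth. Let $y_1, y_2, y_3$ count unordered pairs with overlap $0, 1, 2$ respectively, so that $|b \cup b'| \in \{8, 7, 6\}$ and the joint-inclusion probability equals $\binom{|E|-j}{k-j}/\binom{|E|}{k}$ for the corresponding $j$. Doubling to convert unordered pairs to ordered ones and combining with the diagonal term gives
\begin{align*}
Var[c] &= \gamma |B| - |B| + 2\gamma^2 \sum_{\{b,b'\}} \Pr[X_b X_{b'} \ne 0] - |B|(|B|-1) \\
       &= \gamma |B| - |B|^2 + 2\gamma^2 \Bigl( y_1 \tfrac{\binom{|E|-8}{k-8}}{\binom{|E|}{k}} + y_2 \tfrac{\binom{|E|-7}{k-7}}{\binom{|E|}{k}} + y_3 \tfrac{\binom{|E|-6}{k-6}}{\binom{|E|}{k}} \Bigr),
\end{align*}
which is the target identity once $\mathbb{E}[c] = |B|$ from Theorem~\ref{theorem:unbiasedness} is substituted. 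For the upper bound, the monotonicity of $\binom{|E|-j}{k-j}/\binom{|E|}{k}$ in $j$ (valid for $j \le k$) lets us replace the $y_1$ and $y_2$ coefficients by the larger $\binom{|E|-6}{k-6}/\binom{|E|}{k}$, after which $y_1 + y_2 + y_3 = \binom{|B|}{2}$ collapses the remaining sum.

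The main obstacle will be justifying the scaled-Bernoulli representation of $X_b$, because the per-arrival increment in Algorithm~\ref{alg:abacus} is built from the three-edge probability $Pr(|E|, c_b, c_g)$ of Equation~\ref{eq:reciprocal} rather than the four-edge probability $1/\gamma$. Reconciling them demands an inductive coupling argument that tracks how Random Pairing's compensated resamplings redistribute past contributions so that the marginal of $X_b$ at time $t$ matches the four-edge uniform-sample estimator. The most tedious bookkeeping is the case $c_b + c_g > 0$, where the $|E|$ and $k$ inside the binomials must be read through the RP-adjusted sample size $y = \min(k, |E| + c_b + c_g)$ and effective total $T = |E| + c_b + c_g$ from Equation~\ref{eq:reciprocal}, and the classification of shared edges between $b$ and $b'$ must then be weighted by joint survival probabilities that mirror the three cases above.
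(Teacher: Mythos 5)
Your proof is correct and follows essentially the same route as the paper: a per-butterfly indicator decomposition with uniform-sample joint-inclusion probabilities classified by edge overlap ($0$, $1$, or $2$ shared edges, with three impossible), and the same collapse to the $\binom{|E|-6}{k-6}/\binom{|E|}{k}$ term for the upper bound; your variance-plus-covariance bookkeeping is just an algebraic reorganization of the paper's direct computation of $\mathbb{E}[c^2]-\mathbb{E}[c]^2$. The reconciliation you flag between the algorithm's incremental three-edge increments and the static four-edge uniform-sample model is a genuine subtlety, but the paper's own proof simply posits $c = \gamma\sum_i X_i$ without addressing it, so your proposal matches the published argument in both substance and level of rigor.
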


\begin{proof}[Proof]
See Appendix~\ref{appendix:proof-variance}. 
\end{proof}


\begin{corollary}[Concentration]
\label{corollary:concentration} 
\abacus provides estimates that concentrate around the expected value at any point in time. 
Specifically, for Algorithm~\ref{alg:abacus} and any constant $\lambda > 0$ it holds:
\begin{align}
    Pr[|c - \mathbb{E}[c]| \geq \lambda \times \sqrt{Var[c]}] \leq \frac{1}{\lambda^2} \nonumber
\end{align}
where $c$ is the butterfly estimate of \abacus. 
\end{corollary}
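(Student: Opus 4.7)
The plan is to derive the concentration bound as a direct application of Chebyshev's inequality, leveraging the two results just proved. From Theorem 1 we know that $\mathbb{E}[c^{(t)}] = |B^{(t)}|$ is well-defined and finite, and from Theorem 2 we have a finite (indeed, tightly bounded) expression for $Var[c]$. Hence $c$ is a random variable whose first two moments exist, which is all that Chebyshev's inequality requires.

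The key step is to invoke the general form of Chebyshev's inequality: for any real-valued random variable $X$ with finite mean $\mu$ and finite variance $\sigma^2 > 0$, and any $\lambda > 0$,
\begin{equation*}
Pr[|X - \mu| \geq \lambda \sigma] \leq \frac{1}{\lambda^2}.
\end{equation*}
Applying this with $X = c$, $\mu = \mathbb{E}[c]$, and $\sigma = \sqrt{Var[c]}$ immediately yields the claimed inequality. To make the argument self-contained, I would briefly sketch Chebyshev as a one-line consequence of Markov's inequality applied to the nonnegative random variable $(c - \mathbb{E}[c])^2$: namely, $Pr[(c-\mathbb{E}[c])^2 \geq \lambda^2 Var[c]] \leq \mathbb{E}[(c-\mathbb{E}[c])^2] / (\lambda^2 Var[c]) = 1/\lambda^2$, and then take square roots inside the probability.

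There is no real obstacle here, since the statement is a textbook corollary of the previous two theorems; the only subtlety worth flagging is the degenerate case $Var[c] = 0$, which can happen when the sample is large enough to deterministically recover every butterfly (e.g., when $k \geq |E^{(t)}|$). In that case the bound is vacuous for $\lambda > 0$ because the event $\{|c - \mathbb{E}[c]| \geq 0\}$ has probability at most $1 \leq 1/\lambda^2$ is only meaningful for $\lambda \leq 1$, but a direct argument shows $c = \mathbb{E}[c]$ almost surely, so the inequality holds trivially. Otherwise, plugging the tight upper bound from Theorem 2 into $\sqrt{Var[c]}$ gives an explicit, computable concentration radius in terms of $|E|$, $k$, and the pairwise-butterfly counts $y_1, y_2, y_3$, which could be stated as a remark for practical use but is not required to establish the corollary itself.
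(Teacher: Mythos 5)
Your proposal is correct and follows exactly the paper's own route: a direct application of Chebyshev's inequality using the expectation from Theorem~\ref{theorem:unbiasedness} and the variance from Theorem~\ref{theorem:variance}. The extra detail you give (deriving Chebyshev from Markov's inequality on $(c-\mathbb{E}[c])^2$ and handling the degenerate case $Var[c]=0$) only makes the argument more self-contained than the paper's one-line proof.
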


\begin{proof}[Proof] 
See Appendix~\ref{appendix:proof-concentration}. 
\end{proof}

\subsection{Complexity Analysis}
\label{subsec:complexity-analysis}
\noindent Here, we analyze \abacus's complexity with respect to both time and space when processing a bipartite graph stream. 

\noindent \textbf{Time Complexity.} We provide the worst-case analysis in Theorem~\ref{theorem:time-complexity}. 
Essentially, we claim that given a fixed memory budget $k$, \abacus scales linearly with the number of elements in the input bipartite graph stream. 
The time complexity theorem is as follows:

\begin{theorem}[Time Complexity of \abacus]
\label{theorem:time-complexity}
Algorithm~\ref{alg:abacus} takes $O(k^2 t)$ time to process the first $t$ elements in the input bipartite graph stream, where $k$ is the maximum number of edges maintained in the sample. 
\end{theorem}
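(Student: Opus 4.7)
The plan is to bound the work done by \abacus on each incoming stream element by $O(k^2)$ and then multiply by $t$. Since the sample $\mathcal{S}$ contains at most $k$ edges at any point in time (enforced by the RP sampling scheme in Algorithm~\ref{alg:random-pairing}), the degree of every vertex in $\mathcal{S}$ satisfies $d_x^{\mathcal{S}} \leq k$, and similarly $|N_x^{\mathcal{S}}| \leq k$ for any vertex $x$. These two facts are the workhorses of the argument.

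First, I would decompose the per-element cost in Algorithm~\ref{alg:abacus} into the butterfly-counting phase (lines~4--11) and the sample-maintenance phase (lines~13--14). For the sample update, inserting or deleting an edge together with updating the RP counters $c_b, c_g$ and (when needed) choosing a random eviction from $\mathcal{S}$ can be done in $O(k)$ time using standard adjacency-list plus indexed-array representations, so the sample update is dominated by the counting phase. For the counting phase, I would argue that the side-selection step in line~7 can be computed in $O(k)$ (it sums at most $k$ degrees on each side, since $|N^{\mathcal{S}}_u|, |N^{\mathcal{S}}_v| \leq k$). Then the outer loop in line~8 iterates over at most $|N^{\mathcal{S}}_u| \leq k$ neighbors, and each iteration performs one set intersection in line~9 followed by an inner loop over the result.

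The key observation I would emphasize is that each set intersection $\mathtt{SetIntersection}(N^{\mathcal{S}}_w, N^{\mathcal{S}}_v)$ costs $O(k)$: with the neighbor sets stored as hash sets, the intersection runs in time proportional to the size of the smaller set, which is at most $k$, and the inner loop over $\mathcal{CN}$ adds at most $|\mathcal{CN}| \leq k$ constant-time updates to $c$. Thus the total cost of lines~8--11 is $O(k) \cdot O(k) = O(k^2)$ per incoming element. Adding the $O(k)$ sample-update cost keeps the per-element bound at $O(k^2)$.

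Summing over the first $t$ elements of the stream immediately gives the claimed $O(k^2 t)$ total time. The main obstacle, to the extent there is one, is being precise about the data-structure assumptions so that the side-selection scan, the set intersection, and the RP replacement step all fit inside $O(k)$; once one fixes hash-based neighbor sets and an indexed array of sampled edges, the rest of the argument is a direct degree-bound calculation.
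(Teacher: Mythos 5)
Your proposal is correct and follows essentially the same route as the paper's proof: bound the per-edge butterfly-counting phase by $O(k^2)$ using the fact that all neighbor-set sizes and degrees in $\mathcal{S}$ are at most $k$, note that this dominates the remaining per-element work, and multiply by $t$. You are somewhat more explicit than the paper about the data-structure assumptions and the $O(k)$ cost of the sample update and side selection, but these are refinements of the same argument rather than a different one.
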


\begin{proof}[Proof]
\label{proof:time-complexity}
The most expensive operation in Algorithm~\ref{alg:abacus}, is the per-edge butterfly counting for spotting the butterflies that each incoming edge $e^{(t)} = (\{u^{(t)},v^{(t)}\},\delta)$ forms with the edges in the graph sample. 
The per-edge counting process takes $\Lambda = O(\min \{\sum_{x \in N^{\mathcal{S}}_u} \min \{d_x,d_v\}, \sum_{x \in N^{\mathcal{S}}_v} \min \{d_w, d_u\}\})$ time for an incoming edge $\{u, v\}$. 
All the vertex degrees are upper-bounded by $k$, which is the maximum number of edges in the sample. 
Therefore, $\Lambda = O(k^2)$, and the time complexity for processing the first $t$ elements is $O(k^2 t)$. 
\end{proof}

\noindent \textbf{Space Complexity.} 
We provide the space analysis
in Theorem~\ref{theorem:space-complexity}. 
{In a nutshell, given a fixed memory budget $k$, \abacus needs to store at most $k$ elements as a sample of the input bipartite graph stream, and a counter for the butterfly count estimate. 
The theorem for the space complexity is as follows: 

\begin{theorem}[Space Complexity of \abacus]
\label{theorem:space-complexity} 
Algorithm~\ref{alg:abacus} has a space complexity of $O(k)$. 
\end{theorem}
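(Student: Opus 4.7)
The plan is to account, one by one, for every piece of state that Algorithm~\ref{alg:abacus} keeps in memory between iterations of the outer for-loop, and to show that the aggregate size is linear in $k$.

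First, I would argue that the sample $\mathcal{S}$ is the dominant object and that its size is bounded by $k$ at all times. This is immediate from inspecting Algorithm~\ref{alg:random-pairing}: the branch in line~4 only appends when $|\mathcal{S}| < k$, the branch in lines~5--6 is a replacement (no net growth), and the compensation branch in lines~8--9 is entered only after a prior deletion removed an edge from $\mathcal{S}$, so it too cannot push $|\mathcal{S}|$ above $k$. A simple invariant argument ($|\mathcal{S}| \le k$ at the end of every iteration, with $|\mathcal{S}| = 0$ initially) makes this rigorous. Since each edge of $\mathcal{S}$ is a pair of vertex identifiers, storing the sample costs $O(k)$ words.

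Next, I would handle the auxiliary adjacency information needed to execute lines~7--9 of Algorithm~\ref{alg:abacus}, namely the per-vertex neighbor sets $N^{\mathcal{S}}_v$ and the cumulative-degree quantities $\sum_{x \in N^{\mathcal{S}}_v} d_x$. The key observation is that these are merely an indexed view of $\mathcal{S}$: every edge $\{u,v\} \in \mathcal{S}$ contributes exactly one entry to $N^{\mathcal{S}}_u$ and one entry to $N^{\mathcal{S}}_v$. Hence $\sum_{v} |N^{\mathcal{S}}_v| = 2|\mathcal{S}| \le 2k$, so all neighbor sets together fit in $O(k)$ space. The cumulative-degree values can either be recomputed on the fly from the neighbor sets or cached per vertex at an additional $O(k)$ cost, since at most $2k$ distinct vertices ever appear in $\mathcal{S}$.

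Finally, I would note that the remaining state consists only of the scalar counters $|E|$, $c$, $c_b$, $c_g$ initialized in line~3, each of which occupies $O(1)$ space, together with a constant amount of scratch variables (the chosen endpoint, the intersection $\mathcal{CN}$ during a single loop iteration, the computed $increment$). Summing the three contributions yields $O(k) + O(k) + O(1) = O(k)$. I do not expect a serious obstacle here; the only subtle point is to make explicit that the neighbor-index structure is derived from, and therefore size-bounded by, the sample itself, so it does not silently inflate the bound. With that accounting in place the theorem follows directly.
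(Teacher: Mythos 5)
Your proposal is correct and follows essentially the same accounting argument as the paper's proof: the sample holds at most $k$ edges and the remaining state is a constant number of scalars, giving $O(k)$ overall. Your version is simply more thorough, in that you additionally verify the $|\mathcal{S}|\le k$ invariant from Algorithm~\ref{alg:random-pairing} and explicitly bound the adjacency-list index as a size-$2|\mathcal{S}|$ view of the sample, both of which the paper leaves implicit.
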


\begin{proof}[Proof]
\label{proof:space-complexity}
While \abacus processes the first $t$ elements of the input bipartite graph stream maintains a single estimate for the butterfly count.  
Furthermore, \abacus maintains up to $k$ edges that consist of the graph sample, where $k$ is the memory budget and a parameter of our algorithm.  
Therefore, the space complexity for processing the first $t$ elements is $O(k)$. 
\end{proof}  
\begin{figure}[t]
\centering 
\includegraphics[width=0.42 \textwidth]{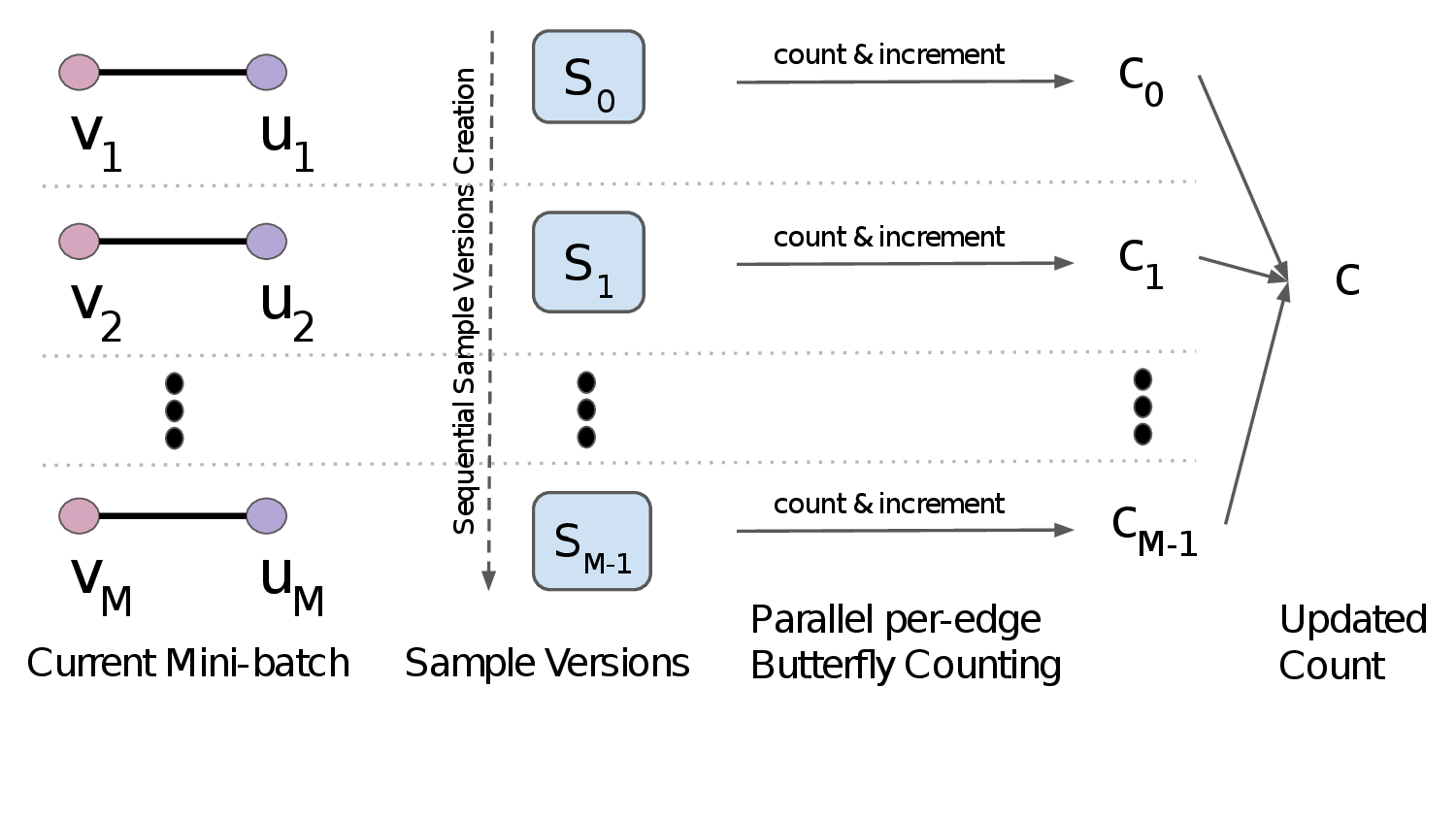}
 \vspace{-0.2cm}
\caption{Overview of \parabacus.} 
\label{fig:parabacus-overview} 
\end{figure}

\section{\parabacus: \abacus goes Parallel}
\label{subsec:parabacus}
\noindent We now present \parabacus, 
the parallel variant of \abacus that processes the 
stream in mini-batches. 
The 
main challenge is to process the edges in a mini-batch simultaneously while attaining the same accuracy as \abacus. 
To achieve this, we revert \abacus's workflow, namely, we first 
perform sample updates corresponding to each edge in the mini-batch and create versions of the sample. 
Next, we conduct the per-edge butterfly counting between an edge 
and its corresponding version of the sample in parallel. 
Next, we first describe how \parabacus processes each mini-batch and, then, provide correctness proof, and its time and space complexities. 

\subsection{\parabacus Algorithm}
\label{subsec:parabacus-algorithm-description}

\noindent \textbf{Sampling and Versioned Samples.} 
\parabacus processes the bipartite graph stream in mini-batches of size $M$. 
Figure~\ref{fig:parabacus-overview} shows the overview of \parabacus' workflow.   
In specific, we process the graph stream in mini-batches that contain $M$ edges each, namely, $\{u_1,v_1\}$, $\{u_2,v_2\}$,$\dots$, $\{u_M,v_M\}$. 
Recall that in \abacus, for each incoming edge, we first conduct the per-edge butterfly counting to refine the butterfly count estimates and then trigger the sample update procedure. 
Since the per-edge butterfly counting is the most time-consuming operation in \abacus's workflow, we ought to effectively parallelize it when processing whole mini-batches. 
To achieve this, \parabacus sequentially processes the edges in the mini-batch once to calculate all the sample states, which would be created as if the edges were processed by \abacus. 
Each different state of the sample $S$ is a distinct sample version. 
For example, as shown in Figure~\ref{fig:parabacus-overview}, $\mathcal{S}_0$ indicates the state of the sample immediately after the arrival of the mini-batch. 
Assuming the edges arrive in the following sequence $\{u_1,v_1\}$, $\{u_2,v_2\}$,$\dots$,$\{u_M,v_M\}$, the edge $\{u_1,v_1\}$ will observe the $\mathcal{S}_0$ version of the sample. 
Subsequently, the edge $\{u_2,v_2\}$ would observe the version $\mathcal{S}_1$, which corresponds to the state of the sample after updating $\mathcal{S}_0$ with incorporating the edge $\{u_1,v_1\}$. 
Similarly, the edge $\{u_M,v_M\}$ would observe the  $\mathcal{S}_{M-1}$ version, which corresponds to the state of the sample after incorporating the updates due to all the edges in the mini-batch except the $M$-th one. 
\parabacus maintains all the calculated versions of the samples $S_0,S_1,\dots,S_{M-1}$ in a single \textit{versioned sample} data structure. 
In particular, we use adjacency lists to store the edges that we sample. 
More precisely, in the versioned sample, each vertex of the sample stores its neighbors in its adjacency list that might change between versions. 
However, from one version to another, we store only the discrepancies between the neighboring sets of each vertex to save space. 
Furthermore, along with each sample version we cache a triplet containing the following information: $\{s, c_g, c_b\}$, where $s$ is the number of edges in the graph stream and $c_g,c_b$ are the good and bad edge deletions that need compensation at the point of creation of the sample version. 
\parabacus utilizes that triplet to calculate the increment using which it refines the butterfly count estimates, as we described in Section~\ref{sec:abacus-method}.

\noindent \textbf{Parallel Per-edge Butterfly Counting.} After assembling the versioned sample,  we have every sample state $S_i$ where $i \in \{0,\dots,M-1\}$ readily available. 
This allows \parabacus to conduct the butterfly counting operations for all the $M$ edges in the mini-batch in parallel. 
Specifically, we have to conduct per-edge butterfly counting between each edge in the mini-batch and its corresponding sample version using a separate thread among the available ones. 
For example, in Figure~\ref{fig:parabacus-overview}, we have to count the butterflies formed between edge $\{v_1,u_1\}$ and $S_0$, the ones formed between $\{v_1,u_1\}$ and $S_1$, all the way to the ones formed between $\{v_{M-1},u_{M-1}\}$ and $S_{M-1}$. 
Assuming that there are $p$ threads available and that we process mini-batches of $M$ edges where $p \leq M$, \parabacus groups the edges into $p$ equal-sized sets. 
Therefore, each thread receives a subset of edges from the mini-batch and has to count the butterflies that its corresponding edges form with their corresponding sample versions. 
Subsequently, every edge, e.g.,~ $\{u_1,v_1\}$, extrapolates its calculated butterfly count by multiplying it with an appropriate increment. 
We compute the increment for each edge using the information in its corresponding cached triplet $\{s, c_b, c_g\}$. 
In specific, we use Equation~\ref{eq:reciprocal} as in 
\abacus to compute each increment and produce the partial counts, $c_0,\dots,c_{M-1}$, as shown in Figure~\ref{fig:parabacus-overview}.  
Recall that depending on whether an edge is an insertion or a deletion its partial count can be positive or negative, respectively. 
Finally, all the calculated partial counts $c_0, c_1,\dots,c_{M-1}$ are added to the old butterfly count so that the final refined count $c$ is computed. 
Note that two different versions of the sample differ slightly from each other, i.e.,~up to $M$ edges. 
Therefore, the vertex degrees among different sample versions are similar, and subsequently, the per-edge butterfly computations are balanced across threads, as we show in the experiments. 

\noindent \textbf{Version Consolidation.} The final step when processing a mini-batch is to consolidate the distinct sample versions $S_0, S_1,\dots, S_{M-1}$ into one. 
Specifically, \parabacus creates and keeps a final version of the sample that integrates all the $M$ edges in the mini-batch. 
In Figure~\ref{fig:parabacus-overview}, the final sample version also incorporates the sample update due to the edge $\{u_M,v_M\}$ into the version $S_{M-1}$, which will serve as the $0$-th version for the next mini-batch.

\subsection{Correctness and Complexity Analysis}
\label{subsec:parallel-complexity-analysis}
\noindent Here, we analyze \parabacus' correctness with respect to the butterfly counts it delivers, and its complexity with respect to both time and space when processing each mini-batch.

\begin{theorem}[Correctness]
\label{theorem:correctness-parabacus} 
\parabacus correctly counts the butterflies in a fully dynamic bipartite stream and provides the same counts as \abacus after processing each mini-batch.   
\end{theorem}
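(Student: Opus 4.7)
The plan is to prove Theorem~\ref{theorem:correctness-parabacus} by induction on the position $i$ of an edge within a mini-batch, showing that every partial count produced by \parabacus is identical to the corresponding per-edge refinement that \abacus would compute when processing the same edge sequentially. The induction invariant I would maintain is that for each $i \in \{0,\dots,M-1\}$, the sample version $\mathcal{S}_i$ together with its cached triplet $\{s_i, c_{b,i}, c_{g,i}\}$ coincides exactly with the state that \abacus would hold after processing the first $i$ edges of the mini-batch, under a coupling of the two algorithms over the same random bits used by Random Pairing (Algorithm~\ref{alg:random-pairing}).

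First I would establish the base case, which is immediate: $\mathcal{S}_0$ and its triplet are defined to be the state of the sample at the arrival of the mini-batch, which is the state \abacus begins the mini-batch in. For the inductive step, I would appeal to the fact that \parabacus builds $\mathcal{S}_{i+1}$ by applying exactly the same $\mathtt{InsertToSample}$ or $\mathtt{DeleteFromSample}$ procedure to $\mathcal{S}_i$ with the $(i+1)$-th edge as \abacus does; because this construction is performed sequentially (not in parallel) and can be coupled over the same random bits, the invariant is preserved.

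Given the invariant, the partial count $c_i$ produced by the thread handling the corresponding edge of the mini-batch equals the refinement \abacus would perform at step $i$: both algorithms run the same set-intersection-based per-edge counting (Algorithm~\ref{alg:abacus}, lines~7--11) against the same sample version, and both compute the same increment from the same cached triplet via Equation~\ref{eq:reciprocal}. Parallelism does not interfere with correctness because each thread reads from its own sample version (stored compactly through inter-version discrepancies, as described in Section~\ref{subsec:parabacus-algorithm-description}) and writes only to its own partial-count slot, so no shared mutable state is touched. Summing $c_0+c_1+\cdots+c_{M-1}$ and adding the result to the prior butterfly count therefore reproduces the cumulative refinement \abacus would accumulate over the mini-batch, and the final consolidation step updates $\mathcal{S}_{M-1}$ with the last edge to re-establish the invariant for the next mini-batch.

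The main subtlety, rather than a deep technical obstacle, is making the random-bit coupling precise. Random Pairing is randomized, so \parabacus and \abacus produce identically distributed samples but not pointwise identical samples unless their coin flips are aligned; the theorem's statement that the two algorithms ``provide the same counts'' should be read as equality under this coupling, i.e., when the sequential construction of $\mathcal{S}_0,\dots,\mathcal{S}_{M-1}$ uses the same replacement and compensation decisions that \abacus would make on the same stream prefix. Once this coupling is fixed, every remaining step is deterministic and the equality of counts follows immediately.
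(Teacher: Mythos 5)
Your proposal is correct and follows essentially the same route as the paper's proof sketch: sequentially constructed sample versions guarantee that each edge sees exactly the sample state it would see under \abacus, the per-edge counting and increments are therefore identical, and summing the partial counts reproduces the sequential result. Your explicit induction and, in particular, your observation that ``same counts'' must be read under a coupling of the Random Pairing coin flips is a welcome sharpening of a point the paper's sketch leaves implicit.
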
 

\begin{proof}[Proof Sketch]
\label{proof:correctness-parabacus}
\parabacus first sequentially creates all the versions of the sample that an edge would observe in the order of its arrival as in \abacus. 
Subsequently, \parabacus exactly counts the per-edge butterflies formed between each edge in the mini-batch and its corresponding sample version. 
The partial counts for each edge are the same as in \abacus. 
The associativity property of the sum of the counts guarantees that the final refined butterfly count is
equal to that of 
\abacus.  
Therefore, \parabacus achieves the same accuracy as \abacus after processing each mini-batch. 
\end{proof}

\noindent Consequently, since \parabacus provides the same butterfly count estimates as \abacus, its estimates are unbiased as well. 

\noindent \textbf{Time  Complexity.} 
We now analyze the time complexity of \parabacus in the worst case as follows:  

\begin{theorem}[Time Complexity of \parabacus]
\label{theorem:parabacus-time-complexity} 
Butterfly counting per mini-batch is performed in $O(M+\frac{M k^2}{p})$ time, where $k$ is the maximum number of edges maintained in the sample, $M$ is the number of edges in each mini-batch, and $p$ is the number of threads. 
\end{theorem}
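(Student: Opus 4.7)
The plan is to decompose the per-mini-batch work of \parabacus into the three stages described in Section~\ref{subsec:parabacus-algorithm-description}---(i) the sequential construction of the versioned sample $\mathcal{S}_0,\dots,\mathcal{S}_{M-1}$ via Random Pairing, (ii) the parallel per-edge butterfly counting across the $p$ threads, and (iii) the final version consolidation---and bound each stage independently. The result then follows by summing the three bounds.

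For stage (i), each of the $M$ edges of the mini-batch triggers exactly one invocation of \texttt{InsertToSample} or \texttt{DeleteFromSample} from Algorithm~\ref{alg:random-pairing}. Each such invocation performs $O(1)$ work: a single Bernoulli trial, a counter update, and one modification to the adjacency list of an endpoint, recorded as a diff against the preceding version together with the cached triplet $\{s,c_g,c_b\}$. The cumulative cost over the mini-batch is therefore $O(M)$, which yields the additive $M$ term in the claimed bound.

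For stage (ii), I would invoke the per-edge analysis already carried out in the proof of Theorem~\ref{theorem:time-complexity}: for any incoming edge $\{u,v\}$ against a \emph{fixed} sample version, the set-intersection-based butterfly discovery costs $O(k^2)$ since every vertex degree in the sample is bounded by $k$. Because \parabacus partitions the $M$ edges into $p$ equal-sized groups and assigns one group per thread, the wall-clock cost of this stage is $O(\lceil M/p \rceil \cdot k^2) = O(Mk^2/p)$. Stage (iii) applies the $O(M)$ recorded diffs to produce the sample that seeds the next mini-batch, contributing another $O(M)$. Summing yields $O(M) + O(Mk^2/p) + O(M) = O(M + Mk^2/p)$.

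The main obstacle I anticipate is justifying that accessing the adjacency list of a vertex in a particular version $\mathcal{S}_{i}$ inside stage (ii) can be done in $O(1)$ per neighbor access---otherwise the diff-only representation could inflate the $O(k^2)$ per-edge bound by a factor that depends on $M$. I would resolve this by observing that during the sequential pass of stage (i) one can maintain, for each vertex touched, a compact version-indexed handle (for example, a per-vertex pointer array indexed by version number, or a materialized snapshot recorded only when the vertex's neighborhood actually changes). Since the total number of neighborhood changes over the mini-batch is at most $O(M)$, this bookkeeping is absorbed into the cost of stage (i) and guarantees the $O(1)$ lookup required to transfer the single-threaded $O(k^2)$ per-edge bound of \abacus verbatim to each thread in \parabacus, completing the argument.
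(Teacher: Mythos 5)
Your proposal is correct and follows essentially the same decomposition as the paper's proof: an $O(M)$ sequential pass to build the versioned sample ($O(1)$ per edge), followed by parallel per-edge counting at $O(k^2)$ per edge distributed over $p$ threads, giving $O(M + \frac{Mk^2}{p})$. The only difference is that you are somewhat more careful than the paper in addressing the cost of version-aware adjacency-list access and the final consolidation step, both of which you correctly absorb into the existing terms.
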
 

\begin{proof}[Proof]
\label{proof:parabacus-time-complexity}  
When processing each mini-batch, \parabacus sequentially processes all $M$ edges to create and maintain a versioned sample, which takes $O(M)$ time (O(1) for each edge). 
After constructing the versioned sample, \parabacus utilizes all $p$ available threads to conduct the per-edge butterfly counting, which takes $O(\frac{M k^2}{p})$ time. 
Note that per-edge counting process takes $\Lambda = O(\min \{\sum_{x \in N^{\mathcal{S}}_u} \min \{d_x,d_v\}, \sum_{x \in N^{\mathcal{S}}_v} \min \{d_w, d_u\}\})$ time for an edge $\{u, v\}$.  
All the vertex degrees are upper-bounded by $k$, which is the maximum number of edges in the sample, and thus, $\Lambda = O(k^2)$. 
Therefore, the time complexity for processing each mini-batch 
is $O(M + \frac{M k^2}{p})$.  
\end{proof}

\noindent \textit{Comparison with \abacus.} Note that \abacus takes $O(M + M k^2) = O(M k^2)$ time to process a mini-batch with $M$ edges. 

\noindent \textbf{Space Complexity.} 
We provide the space analysis in Theorem~\ref{theorem:parabacus-space-complexity}. 
In a nutshell, \parabacus needs to store the $k$ elements as a sample of the input graph stream and up to $M$ elements more for the sample versions. In specific: 

\begin{theorem}[Space Complexity of \parabacus]
\label{theorem:parabacus-space-complexity} 
Butterfly counting is performed in $O(k + M)$ space, where $k$ is the number of edges maintained in the sample, and $M$ is the number of edges in each mini-batch.  
\end{theorem}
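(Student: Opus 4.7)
The plan is to account separately for the three sources of memory consumption in \parabacus---the base sample, the versioned-sample overhead, and the auxiliary per-edge bookkeeping---and then sum them. First, I would invoke the space bound from Theorem~\ref{theorem:space-complexity}: the underlying sample $\mathcal{S}$ contains at most $k$ edges at any time, contributing $O(k)$ space to store the adjacency lists of the sampled vertices.

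Second, I would bound the overhead of the versioned sample $\mathcal{S}_0, \mathcal{S}_1, \ldots, \mathcal{S}_{M-1}$. The key observation, made explicit in Section~\ref{subsec:parabacus-algorithm-description}, is that \parabacus does not materialize each version in full; instead, it stores only the discrepancies between consecutive versions in the adjacency-list representation. Since each of the $M$ edges in the mini-batch triggers at most one sample update (a single insertion or a single deletion, as prescribed by Algorithm~\ref{alg:random-pairing}), each transition $\mathcal{S}_i \to \mathcal{S}_{i+1}$ adds at most $O(1)$ discrepancy records. Summing over all $M-1$ transitions gives $O(M)$ additional space for the version deltas. In parallel, \parabacus caches a constant-size triplet $\{s, c_g, c_b\}$ per version, contributing another $O(M)$ in total.

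Third, I would account for the $M$ partial counts $c_0, c_1, \ldots, c_{M-1}$ produced by the parallel per-edge counting, which collectively occupy $O(M)$ space before being reduced into the final aggregate $c$. Adding the three contributions yields $O(k) + O(M) + O(M) = O(k + M)$, which matches the stated bound.

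The main obstacle, although not deep conceptually, lies in formalizing that discrepancy-based versioning does not asymptotically inflate storage. One has to argue carefully that each edge in the mini-batch modifies at most a constant number of adjacency-list entries in the sample, and that these modifications can indeed be stored incrementally (e.g., as annotated insertion/deletion tags on the affected entries) rather than by copying the affected adjacency lists wholesale; if a naive per-version copy were used, the overhead would blow up to $O(Mk)$ and the stated bound would fail.
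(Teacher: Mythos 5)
Your proof is correct and follows essentially the same decomposition as the paper's: $O(k)$ for the sample, $O(M)$ for the version deltas, and $O(M)$ for the per-edge partial counts. Your additional remarks on why delta-based versioning keeps the overhead at $O(M)$ rather than $O(Mk)$ only make the same argument more explicit.
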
 

\begin{proof}[Proof]
\label{proof:parabacus-space-complexity}  
The sample has $k$ edges. 
Since we maintain a versioned sample that stores only the deltas between versions, we have to maintain up to $M$ more edges. 
Also, we maintain a separate partial count for each edge in the mini-batch; $M$ in total. 
Therefore, \parabacus requires $O(k + M)$ space. 
\end{proof} 
\begin{table}[t!]
    \begin{center}        
        \caption{Datasets Statistics.} 
        \begin{tabular}{l|c|c|c|c|c} 
            \hline
            
            \textbf{Graph} & $|E|$ & $|L|$ & $|R|$ & $B$ & \textbf{Butterfly Density} \\

            \hline
            \hline
            
            \textit{MovieLens} & 10M &  69.8K & 10.6K & 1.1T & $1.1*10^{-16}$ \\
            \textit{LiveJournal} & 112M & 3.2M & 10.7M & 3.3T & $2.1*10^{-20}$ \\
            \textit{Trackers} & 140.6M & 27.6M & 12.7M & 20.0T & $5.1*10^{-20}$ \\
            \textit{Orkut} & 327M & 2.7M & 8.73M & 22.1T & $1.9*10^{-21}$ \\ 
            
            
            
            \hline
        \end{tabular} 
        \label{tab:datasets}
  \end{center}
\end{table}

\begin{figure*}[t]
  \begin{subfigure}[t]{0.24\textwidth}  
    \includegraphics[width=\textwidth]{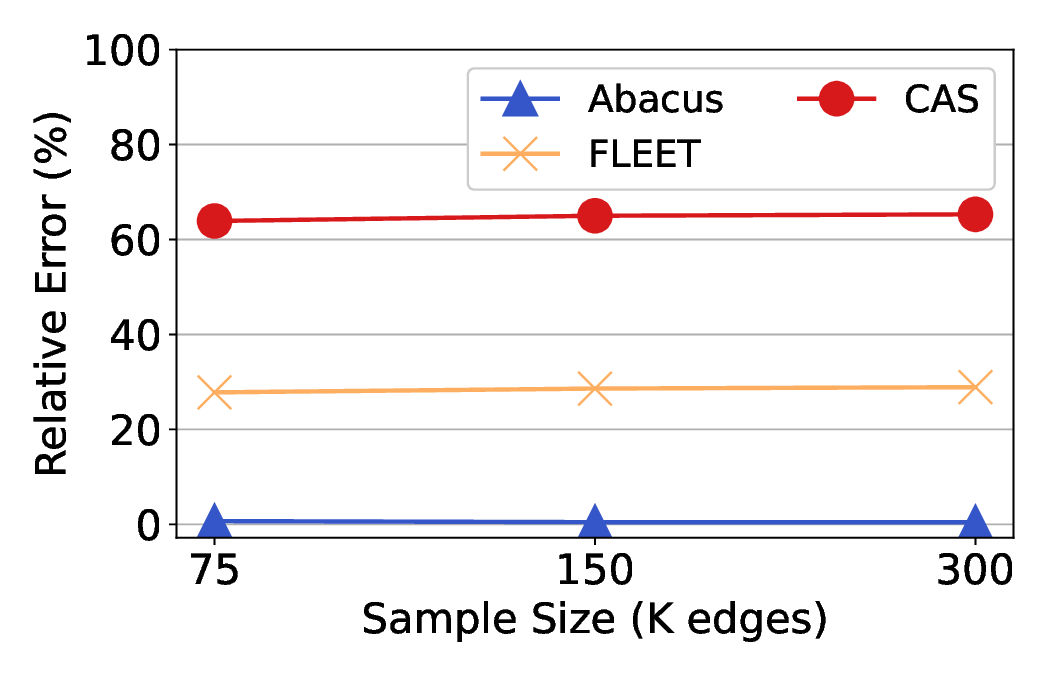} 
    \caption{\textit{Movielens}}
    \label{fig:acc-varying-membudget-movielens}
  \end{subfigure}
  \begin{subfigure}[t]{0.24\textwidth}  
    \includegraphics[width=\textwidth]{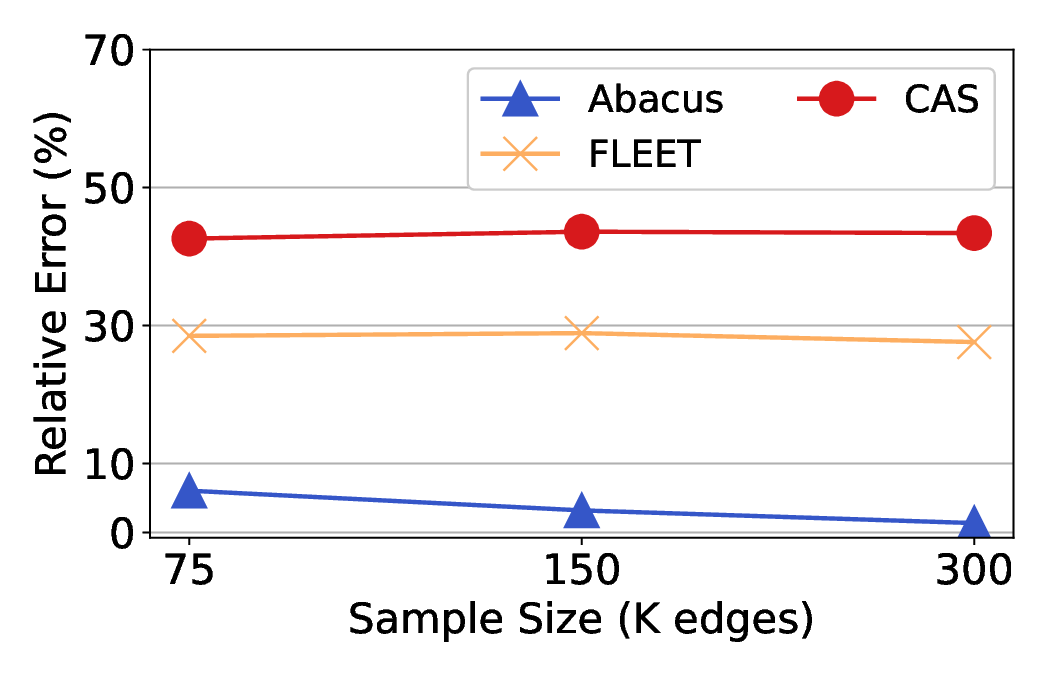}
    \caption{\textit{LiveJournal}}
    \label{fig:acc-varying-membudget-livejournal}
  \end{subfigure}
  \begin{subfigure}[t]{0.24\textwidth}  
    \includegraphics[width=\textwidth]{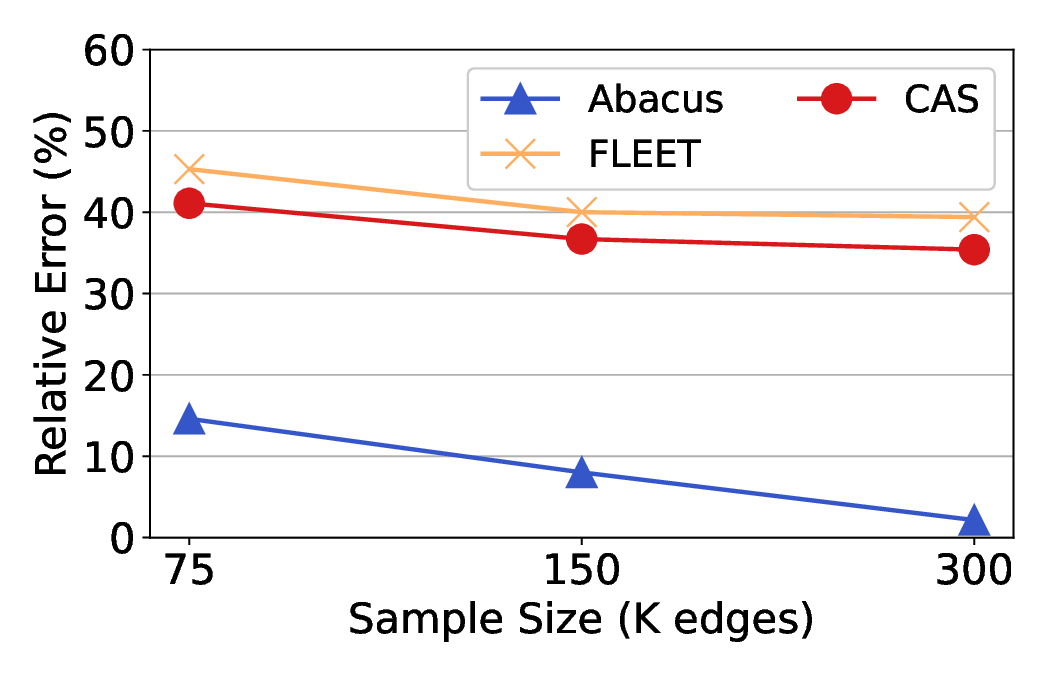}
    \caption{\textit{Trackers}} 
    \label{fig:acc-varying-membudget-trackers}
  \end{subfigure} 
  \begin{subfigure}[t]{0.24\textwidth}  
    \includegraphics[width=\textwidth]{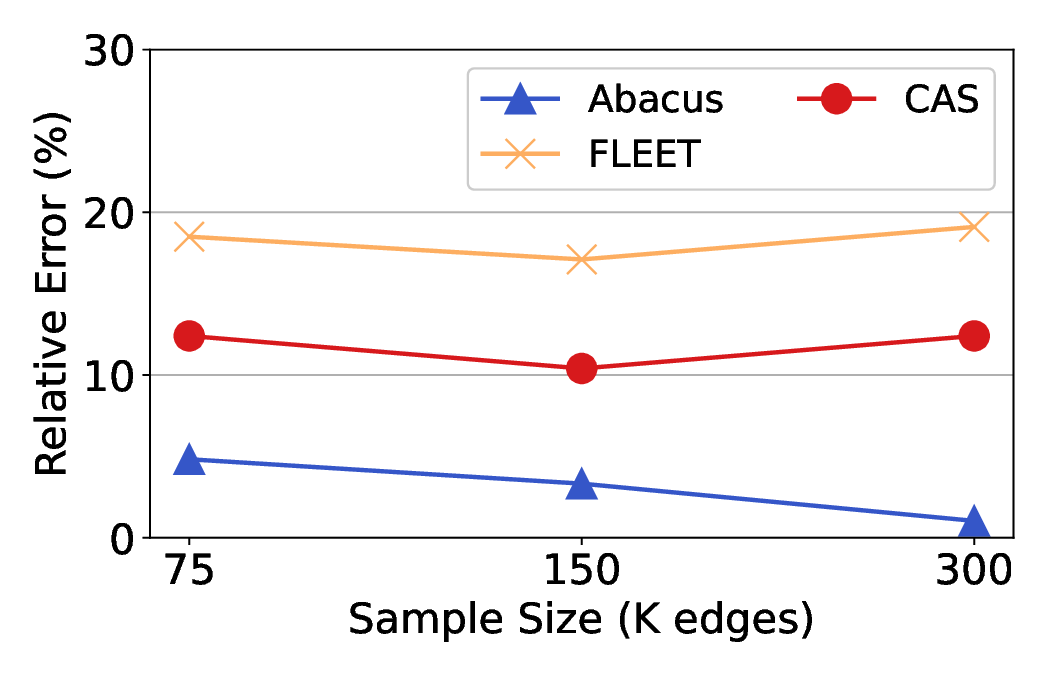}
    \caption{\textit{Orkut}} 
    \label{fig:acc-varying-membudget-orkut}
  \end{subfigure} 
  \caption{Relative Error of \abacus with $20\%$ of deletions while varying the sample size 
  of edges.  Discarding deletions, as in FLEET and CAS, negatively impacts the accuracy.
  }
  \label{fig:accuracy-varying-memory-budget}
\end{figure*}

\section{Experimental Evaluation}
\label{sec:experiments}
\noindent We evaluate \abacus/\parabacus using 
four large-scale real-world 
bipartite graphs and investigate: how effective it is in terms of the error in butterfly estimation; how efficient it is in terms of throughput; 
how it is affected by the amount of edge deletions;
how it scales to large 
graph streams; 
and, how much the speedup of its parallel version, \parabacus, is affected by the mini-batch size and number of threads. 

\begin{myboxi}
Overall, our major findings include that \abacus/\parabacus: 
(i)~achieves significantly higher (up to $148\times$ better) accuracy than the baselines, which inherently cannot handle deletions majorly affecting accuracy,  
(ii)~it has similar throughput with its competitors when processing edge insertions using one thread and much higher throughput in its parallel version when using multiple threads, 
(iii)~it is consistently accurate irrespective of the ratio of deleted edges, 
(iv)~it scales linearly to the number of edges in a graph, 
(v)~\parabacus  accomplishes considerable speedup that depends on the graph characteristics. 
Finally, our solution delivers unbiased estimates as proved in Section~\ref{sec:accuracy-and-complexity}. 
\end{myboxi}

\subsection{Experimental Setup}
\label{subsec:experimental-setup}
\noindent \textbf{Hardware.} We ran our experiments on a server with a 10-core Intel(R) Xeon(R) Gold 5115 CPU @ 2.40GHz with 4-way hyper-threading and $188$GB of main memory.

\noindent \textbf{Implementation.} We implemented \abacus and our baselines in Java. 
For the parallel version of \abacus we used the Callable Java Interface. 
Note that we store the sampled edges using the adjacency list format.

\noindent \textbf{Datasets.} We used four real-world bipartite graphs from the Koblenz Network Collection~\footnote{\url{http://konect.uni-koblenz.de/}}  (KONECT)~\cite{konect}, whose characteristics we show in 
Table~\ref{tab:datasets}. 
\textit{\textbf{MovieLens}} contains movie ratings by users. 
\textit{\textbf{LiveJournal}} is a bipartite graph of the LiveJournal social network with users and their group memberships.  
\textit{\textbf{Trackers}} is a bipartite graph
of internet domains and trackers, where each edge represents that a tracker is identified by its domains. 
\textit{\textbf{Orkut}} consists of group-user relationships where edges represent the group memberships of users. 
We preprocessed and converted the 
graphs to be undirected and unweighted. 
Also, we removed duplicate edges, self-loops, and zero-degree vertices. 
In all our datasets, we simulate the stream assuming that an edge arrives at each discrete time $t \geq 1$. 
All edges arrive in their natural order as in the datasets. 
\textit{Deletions. }
Our real 
datasets are insertion-only by default, and thus, we generate fully dynamic graph streams 
by 
generating the edge deletions from the 
graphs listed in Table~\ref{tab:datasets}. 
In specific, we (a) create the insertions of each edge in the input bipartite graphs using their natural order, (b) create the deletions by selecting $\alpha\%$ of the edges from the input bipartite graphs, (c) place each created deletion in a random position after its corresponding insertion. 
We use $\alpha=20\%$ as our default value and 
assess the impact of varying $\alpha$ in Section~\ref{subsec:impact-of-deletions-a}.
These values stem from~\cite{DBLP:conf/cscw/AlmuhimediWLSA13} which reports up to $30\%$ edge deletions in real-world Twitter graphs.  

\begin{figure*}[t]
  \begin{subfigure}[t]{0.24\textwidth}  
    \includegraphics[width=\textwidth]{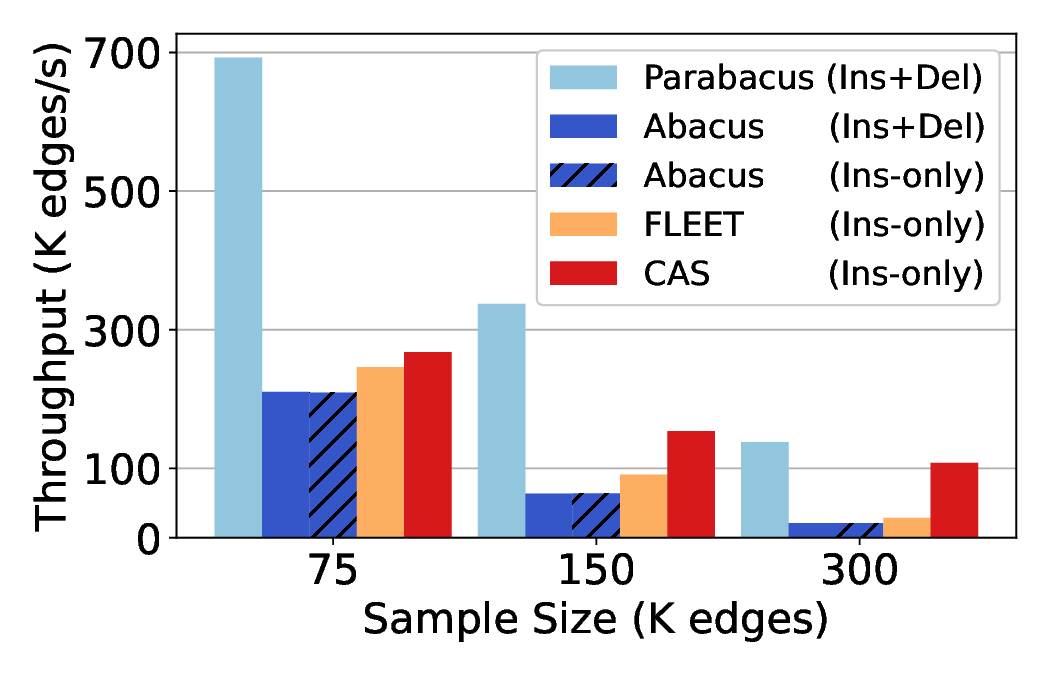}
    \caption{\textit{Movielens}}
    \label{fig:throughput-varying-alpha-movielens}
  \end{subfigure}
  \begin{subfigure}[t]{0.24\textwidth}   
    \includegraphics[width=\textwidth]{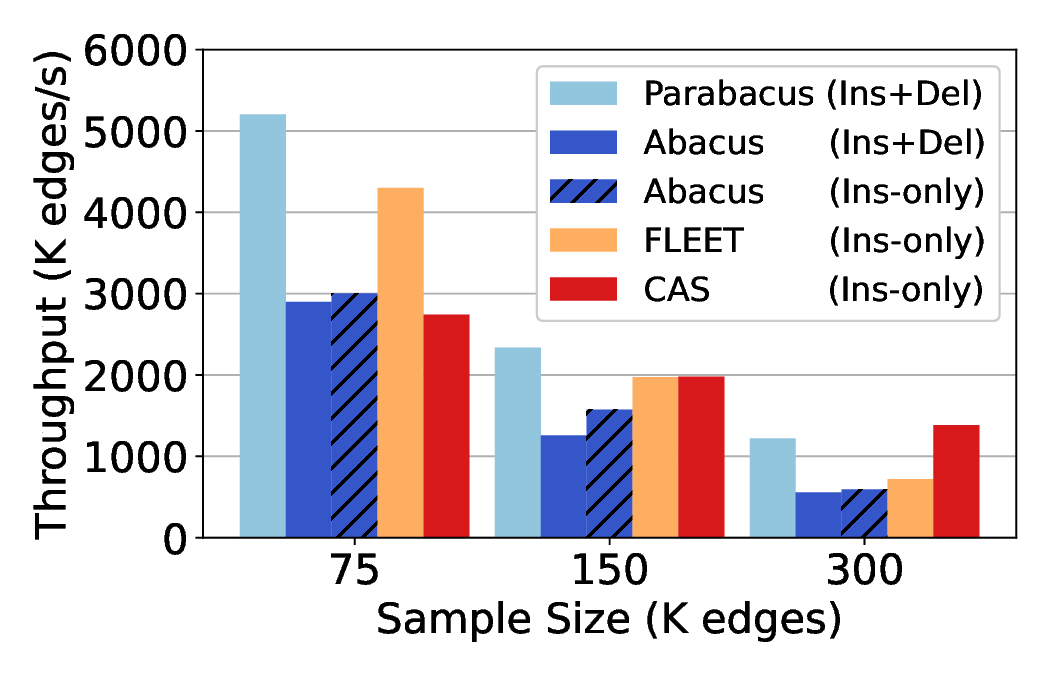}
    \caption{\textit{LiveJournal}}
    \label{fig:throughput-varying-alpha-livejournal}
  \end{subfigure}
  \begin{subfigure}[t]{0.24\textwidth}   
    \includegraphics[width=\textwidth]{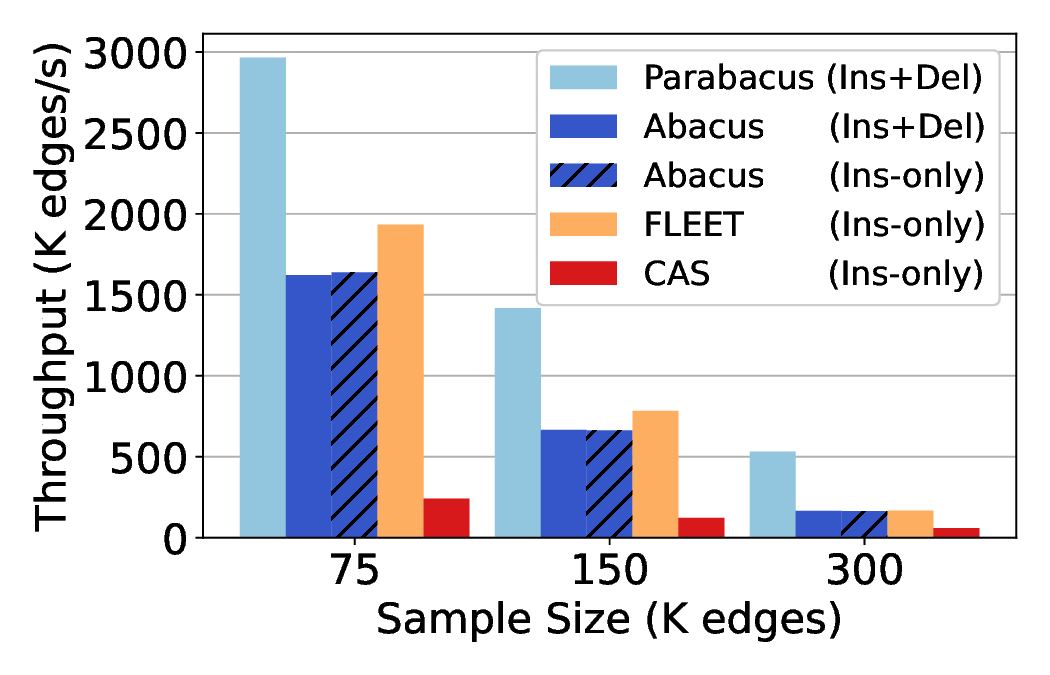}
    \caption{\textit{Trackers}} 
    \label{fig:throughput-varying-alpha-trackers}
  \end{subfigure} 
  \begin{subfigure}[t]{0.24\textwidth}  
     \includegraphics[width=\textwidth]{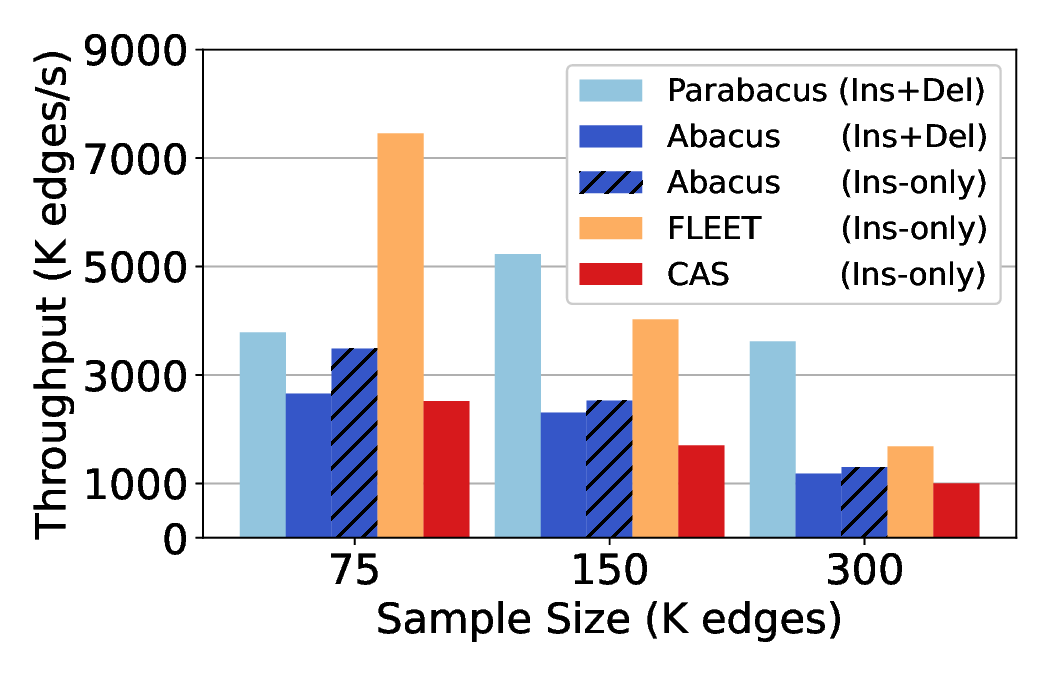}
    \caption{\textit{Orkut}} 
    \label{fig:throughput-varying-alpha-orkut}
  \end{subfigure} 
  \caption{Throughput   for all datasets with $20\%$ of deletions, while varying the sample size of edges. Notably \abacus achieves a similar throughput with the baselines when processing either insertions only (Ins-only) or both insertions and deletions (Ins+Del). 
  \parabacus achieves a significantly higher throughput for a small mini-batch size of $500$ edges. 
  }
  \label{fig:throughput-varying-membudget} 
\end{figure*}

\noindent \textbf{Baselines.} To our knowledge, no existing solution considers estimating butterfly counts on fully dynamic graph streams, entailing both insertions and deletions. 
Yet, we compare \abacus/\parabacus with FLEET~\cite{cikm19-fleet} and CAS~\cite{tkde21-cas}, which are the state-of-the-art approaches for insertion-only bipartite graph streams and are the most relevant techniques to our problem.
We do so, first, to quantify the effect of disregarding edge deletions on accuracy, and, second, to compare the throughput of \abacus and its parallel variant, \parabacus, with that of the best available solutions designed for insertion-only streams. 
In specific, we use FLEET3, the best method of~\cite{cikm19-fleet}, with a reservoir resizing parameter $\gamma = 0.75$ as proposed, and CAS-R, the best method of~\cite{tkde21-cas}, with the ratio of memory usage of AMS sketch to total memory equal to $\lambda = 0.33$ as proposed. 
For \parabacus, we use a mini-batch size of $500$ and $40$ threads unless indicated otherwise.

\noindent \textbf{Evaluation Metrics.} Let $x$ be the true butterfly count and let $\hat{x}$ be the corresponding estimate obtained by the evaluated algorithm. 
For evaluating the accuracy of a method, we use the \textit{relative error} metric (the lower the better), which is defined as $\frac{|x - \hat{x}|}{x}$, for a true butterfly count $x$ that is greater than zero.

\subsection{Accuracy}
\label{subsec:accuracy-global}
\noindent We first investigate the accuracy of estimating butterfly counts in the presence of edge deletions. 
\abacus and \parabacus demonstrate the same accuracy and, thus, we denote our solution as \abacus for simplicity.

We vary the sample size from $75K$ to $300K$ edges. 
We run each experiment $10$ times and show the average relative error values in Figure~\ref{fig:accuracy-varying-memory-budget}. 
\abacus provides $40.6-65.7\times$ more accurate counts than FLEET in \textit{Movielens}, $4.7-20.3\times$ in \textit{Livejournal}, $3.8-18.5\times$ in \textit{Trackers}, and $3.2-18.4\times$ in \textit{Orkut}. 
Also, \abacus provides $93.4-148.4\times$ more accurate counts than CAS in \textit{Movielens}, $7.1-31.9\times$ in \textit{Livejournal}, $2.81-16.54\times$ in \textit{Trackers}, and $2.57-12.03\times$ in \textit{Orkut}.

This clearly indicates that edge deletions have a significant impact on butterfly counts if they are ignored, and therefore, it is essential to handle them carefully in order to achieve optimal performance. 
Furthermore, \abacus is capable of maintaining quite accurate counts on all datasets and achieves on average $~0.52\%$ relative error on \textit{Movielens}, $~3.54\%$ on \textit{Livejournal}, $~8.25\%$ on \textit{Trackers}, and $~3.05\%$ on \textit{Orkut}. 
Additionally, we observe that the relative error decreases as the sample size increases. 
For instance, in \textit{Livejournal} \abacus achieves $6.06\%$ error for a sample size of $75K$ edges, and only $1.36\%$ error when maintaining $300K$ edges. 
We observe a similar pattern across the remaining datasets.  
As the sample size grows, \abacus can more precisely calculate the number of butterflies that each incoming edge forms with the edges in the sample, allowing for a more accurate estimation. 
This is not the case for the baselines, because they discard the edge deletions. 
Consequently, their samples are not representative of the fully dynamic graph streams irrespective of the sample size. 
\textit{Therefore, we conclude that (i) it is imperative to account for edge deletions when estimating butterfly counts, and (ii) \abacus accurately estimates the butterfly counts in bipartite graph streams containing both edge insertions and deletions.}

\begin{figure*}[t]
  \begin{subfigure}[t]{0.24\textwidth}  
    \includegraphics[width=\textwidth]{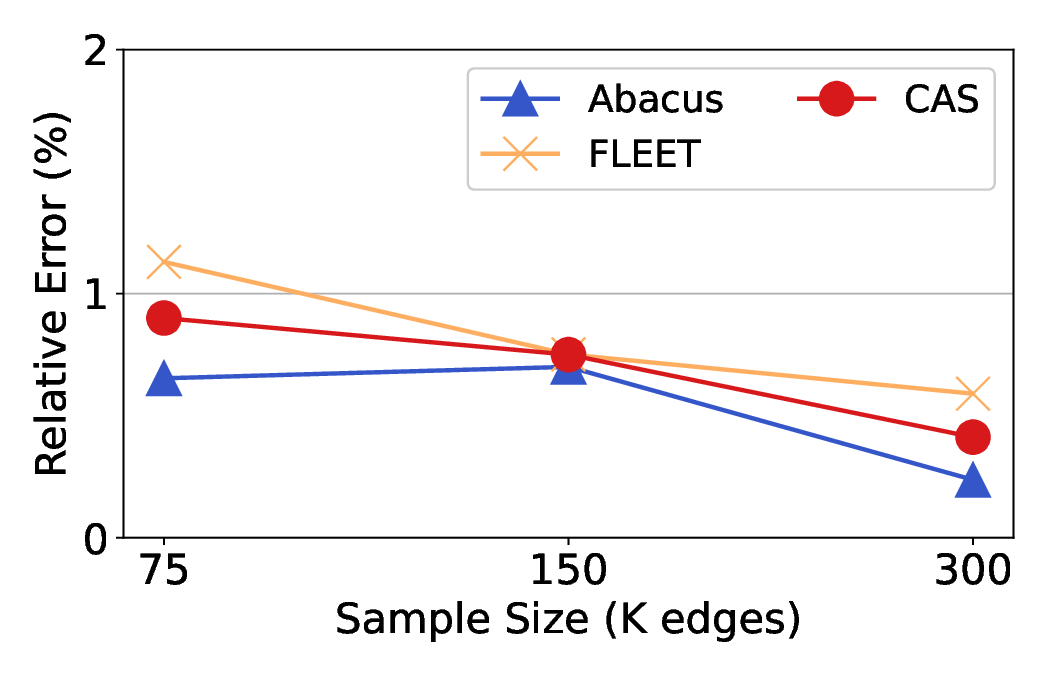} 
    \caption{\textit{Movielens}}
    \label{fig:acc-insertion-only-movielens}
  \end{subfigure}
  \begin{subfigure}[t]{0.24\textwidth}  
    \includegraphics[width=\textwidth]{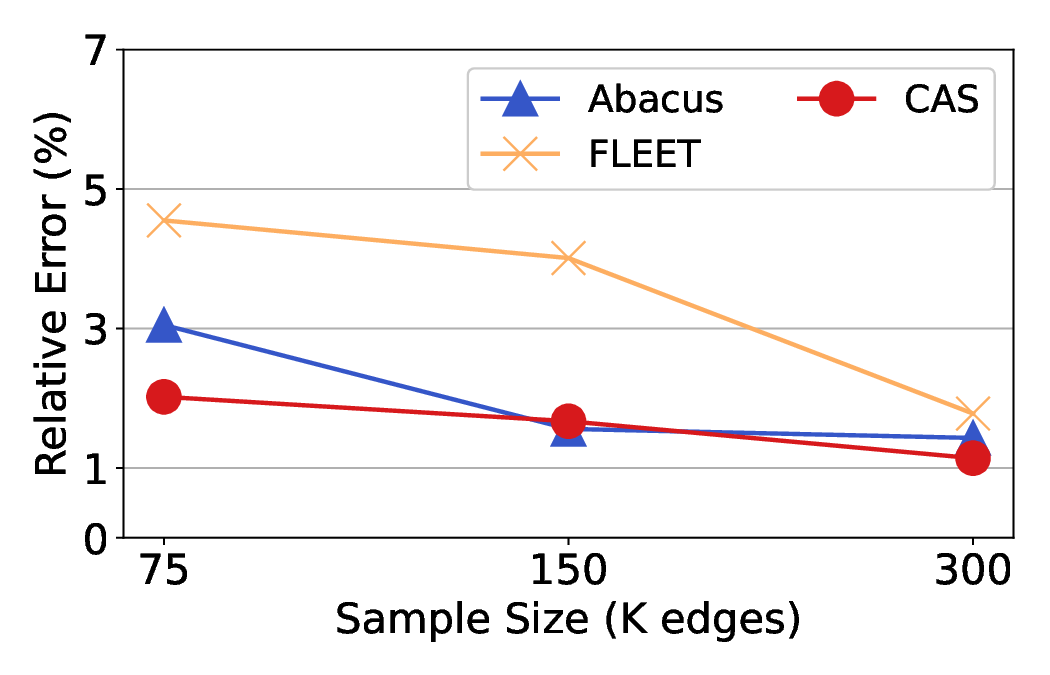}
    \caption{\textit{LiveJournal}}
    \label{fig:acc-insertion-only-livejournal}
  \end{subfigure}
  \begin{subfigure}[t]{0.24\textwidth}  
    \includegraphics[width=\textwidth]{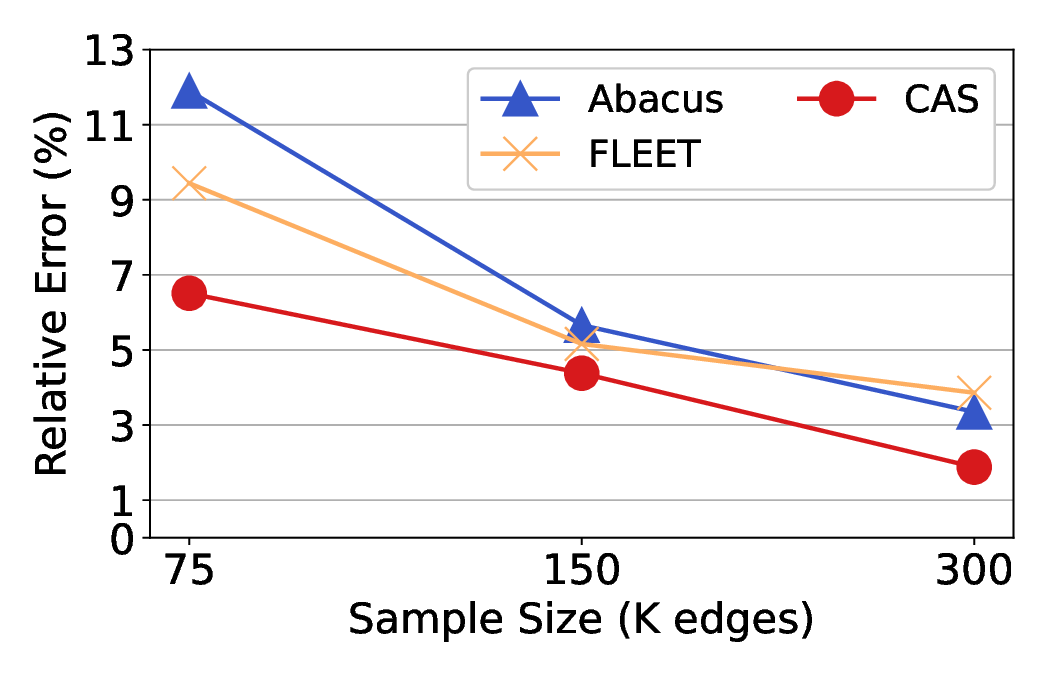}
    \caption{\textit{Trackers}} 
    \label{fig:acc-insertion-only-trackers}
  \end{subfigure} 
  \begin{subfigure}[t]{0.24\textwidth}  
    \includegraphics[width=\textwidth]{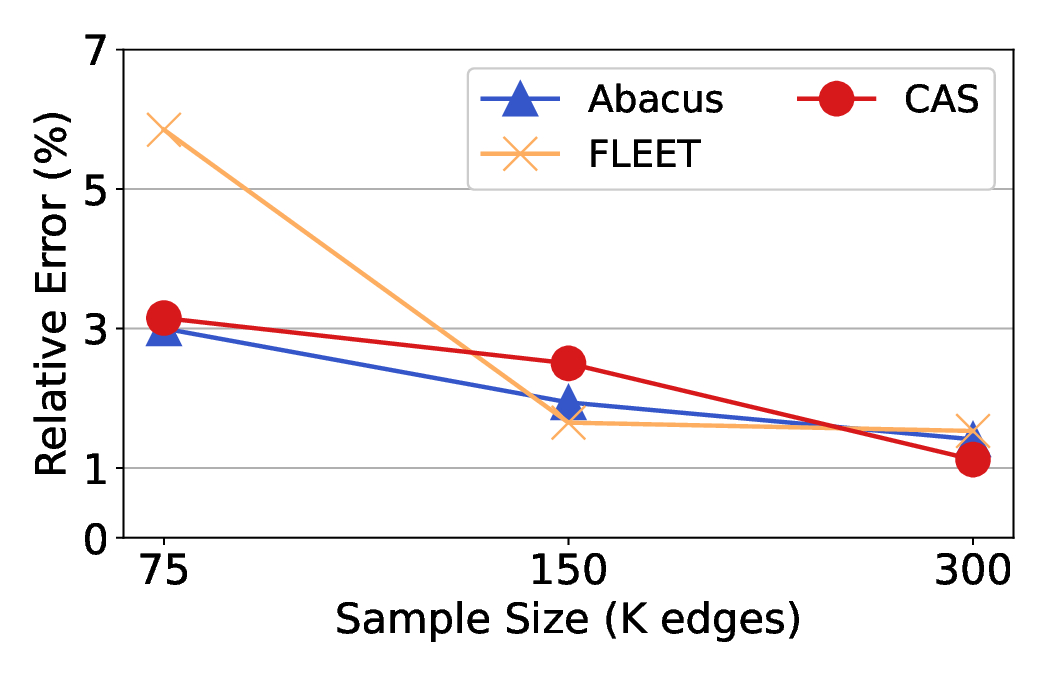}
    \caption{\textit{Orkut}} 
    \label{fig:acc-insertion-only-orkut}
  \end{subfigure} 
  \caption{Relative Error of \abacus on insertion-only streams ($\alpha = 0\%$), while varying the sample size 
  $k$ of edges maintained.  
  }
  \label{fig:accuracy-insertion-only}
\end{figure*}

\subsection{Throughput}
\label{subsec:throughput}

\noindent We now compare \abacus/\parabacus with FLEET and CAS in terms of throughput, i.e., the number of edges processed per second. 
For both FLEET and CAS, we set the reservoir size equal to the sample size in \abacus/\parabacus.  
For calculating the throughput, we measure the running time of each method independently of the ingestion rate of the input graph stream ignoring the waiting time for each edge's arrival.

Figure~\ref{fig:throughput-varying-membudget} illustrates the throughput that \abacus, \parabacus, 
FLEET, and CAS achieve when processing input graph streams with insertions and deletions ($\alpha = 20\%$).  
For \abacus/\parabacus, we show the throughput it achieves for processing both insertions and deletions. 
For a fair comparison with the baselines that do not support deletions, we also show the throughput of \abacus for processing the insertions only (Ins-only). 
In general, we observe that \abacus achieves a throughput close to that of FLEET and CAS, not only in the case of insertions-only but also in the case where \abacus handles deletions. 
In addition, we see that \parabacus significantly enhances the throughput and by far surpasses the baselines in the majority of cases, even with a relatively small mini-batch size of $500$ edges. 
In specific, \parabacus achieves up to $4.85\times$ higher throughput than FLEET, and up to $12.26\times$ higher throughput than CAS, without sacrificing the accuracy.
The throughput enhancement increases when using a larger mini-batch size, as we show later in Section~\ref{subsec:parallelization}.

In more detail, we observe that \abacus' throughput when processing insertions is similar to the throughput of FLEET for sample sizes of $150K$ and $300K$ edges.  
However, for smaller sample sizes such as $75K$ edges, FLEET attains approximately up to $~1.5\times$ higher throughput than \abacus. 
This happens because FLEET always maintains a non-full sample as it resizes its sample and keeps only the $75\%$ of it every time it reaches its maximum capacity. 
Therefore, the per-edge butterfly counting after each edge's arrival is conducted using a consistently smaller sample than in \abacus.
Another reason for this corner case is the low density of \textit{Orkut}, which leads to even less number of butterlies to be formed in the maintained sample. 
In addition, we see that \abacus achieves a similar throughput to CAS, except in \textit{Trackers} graph where CAS attains a lower throughput. 
We found out that around half of the time in CAS is attributed to the update of the sketch it reserves. 
Finally, for all approaches, we see that, in general, the more edges their sample has the lower the achieved throughput.
This is reasonable since more work related to butterfly counting is required for processing the same bipartite graph stream.  

\noindent \textit{We conclude that \abacus achieves a throughput very close to that of FLEET and CAS and \parabacus can achieve an order of magnitude higher throughput even when using a small mini-batch size of $500$ edges. 
Consequently, performance is not sacrificed when processing edge deletions.}

\begin{figure}[t]
  \begin{subfigure}[t]{0.235\textwidth}  
    \includegraphics[width=\textwidth]{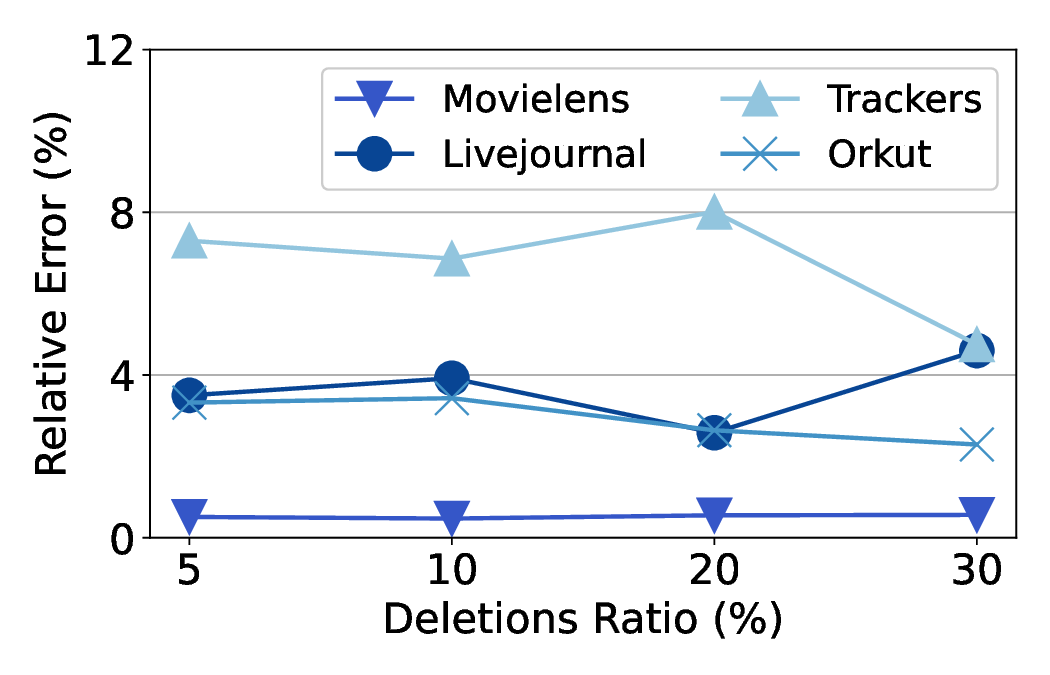}
    \caption{Relative Error}  
    \label{fig:alpha-deletions-accuracy-relative-error}
  \end{subfigure}
    \begin{subfigure}[t]{0.235\textwidth}  
    \includegraphics[width=\textwidth]{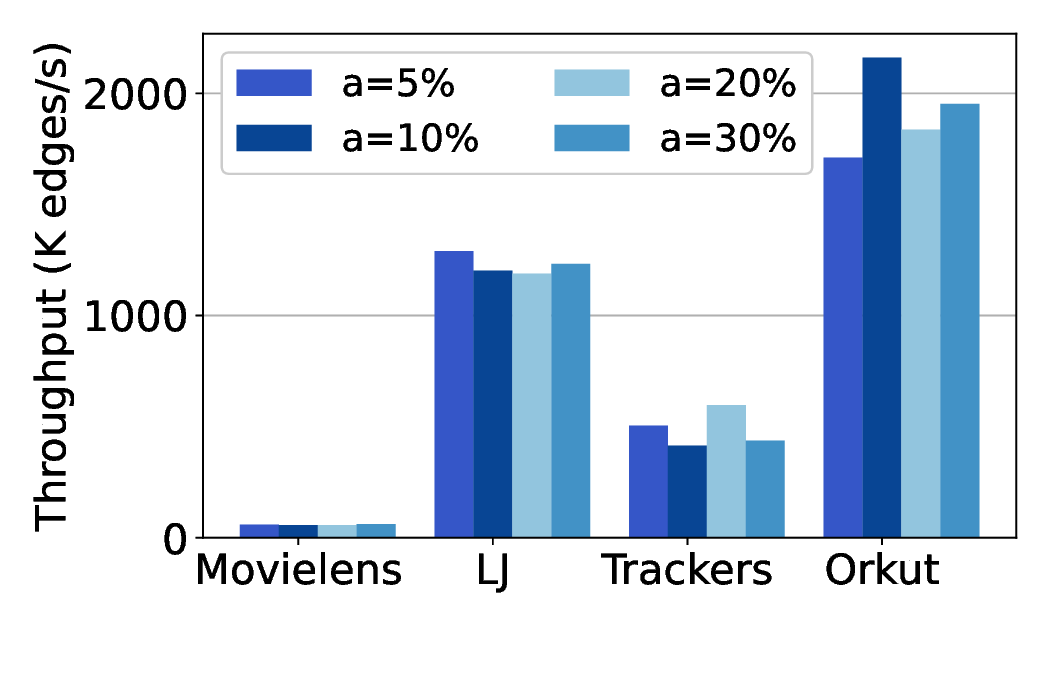}
    \caption{Throughput}  
    \label{fig:alpha-deletions-throughput}
  \end{subfigure} 
  \caption{Impact of deletions $\alpha$ on 
  accuracy and throughput.} 
  \label{fig:impact-deletions-a} 
\end{figure}

\subsection{Accuracy for Insertion-only Streams}
\label{subsec:insertion-only-streams}
\noindent We now compare \abacus with FLEET and CAS in terms of accuracy (i.e., relative error) when processing insertion-only bipartite graph streams, i.e.,~$\alpha=0\%$. 

Figure~\ref{fig:accuracy-insertion-only} shows the accuracy in terms of relative error that \abacus, FLEET, and CAS achieve over bipartite graph streams that contain no deletions.  
We vary the sample size from $75K$ to $300K$ edges.  
We run each experiment 10 times and show the average relative error values in Figure~\ref{fig:accuracy-insertion-only}. 
We observe that \abacus maintains accuracy comparable to that of FLEET, and is even more accurate in \textit{Movielens} and \textit{Livejournal} bipartite graph streams.  
We attribute this to the fact that \abacus maintains a sample that has a maximum size equal to its memory budget and always strives to keep its sample full, whereas FLEET resizes its sample every time it becomes full and keeps only $75\%$ of the edges it contains.  
Furthermore, \abacus achieves similar accuracy to CAS indicating that it does not exhibit deficiencies in the absence of deletions. 
In addition, we observe that the relative error decreases as the sample size increases. 
For instance, \abacus achieves $11.9\%$ relative error in \textit{Trackers} for a sample size of $75K$ edges, and only $3.35\%$ error when maintaining $300K$ edges in its sample. 
We observe a similar trend in the rest of the datasets. 
This holds for \abacus, FLEET, and CAS because the more edges stored in their sample, the more precisely they 
estimate the butterfly counts. 
\textit{We conclude that \abacus provides butterfly count estimations on insertion-only streams that are at least as accurate as the methods designed specifically for processing insertion-only streams.}

\subsection{Impact of Deletions}
\label{subsec:impact-of-deletions-a}
\noindent We proceed in exploring the impact of deletions ratio, $\alpha$, on the accuracy (i.e., relative error) and throughput (i.e., number of edges processed per second) of our approach. 
We use a sample size of $150K$ edges and vary $\alpha$ from $5\%$ up to $30\%$.

Figure~\ref{fig:alpha-deletions-accuracy-relative-error} illustrates the relative error for the butterfly count that \abacus entails when varying the actual ratio of deletions $\alpha$. 
We observe that \abacus produces relatively accurate butterfly counts by maintaining a sample of only $150K$ edges irrespective of the size of the graph stream. 
Specifically, the relative error in all of our datasets is less than $8\%$. 
Furthermore, we observe that the relative error of \abacus is consistent across datasets and is unaffected by the ratio $\alpha$, indicating that \abacus maintains a small error in the presence of deletions regardless of their number. 
In addition, Figure~\ref{fig:alpha-deletions-throughput} shows the effect of $\alpha$ on the overall throughput of \abacus when processing an input bipartite stream with deletions. 
Note that the bigger the deletions ratio $\alpha$ the more edges exist in a graph stream in total. 
Despite this, \abacus maintains a constant throughput for a given dataset, regardless of the deletions ratio $\alpha$. 
Also, note that the throughput of \abacus varies based on the dataset. 
The graph characteristics, such as the butterfly density of a  graph affect the number of butterflies observed after the arrival of each edge and, thus, lead to different throughput for each distinct graph stream. 
\textit{Therefore, we conclude that \abacus provides consistently accurate butterfly count estimations and maintains steady throughput regardless of the ratio of deletions in the bipartite graph stream.}

\begin{figure}[t]
  \begin{subfigure}[t]{0.235\textwidth}   
    \includegraphics[width=\textwidth]{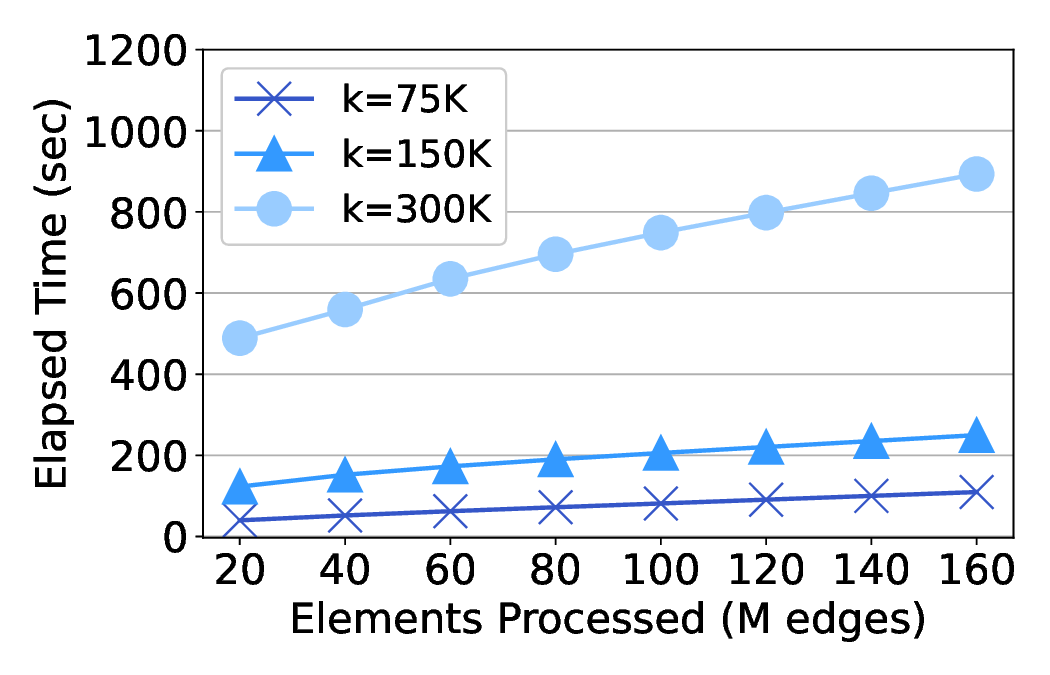}
    \caption{\textit{Trackers}}  
    \label{fig:scalability-trackers}
  \end{subfigure}
    \begin{subfigure}[t]{0.235\textwidth}   
    \includegraphics[width=\textwidth]{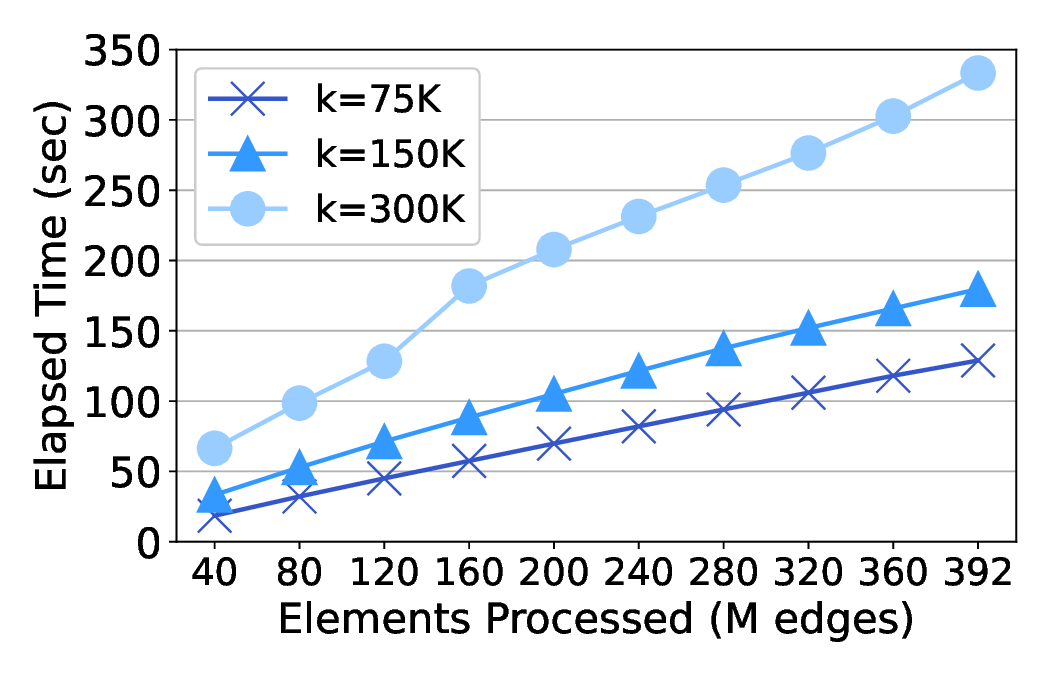}
    \caption{\textit{Orkut}}  
    \label{fig:scalability-orkut}
  \end{subfigure} 
  \caption{\abacus scales linearly with the input stream size.} 
  \label{fig:scalability-input-size} 
\end{figure}

\begin{figure*}[hbtp]
  \centering
  \begin{subfigure}[t]{0.24\textwidth}
    \includegraphics[width=\textwidth]{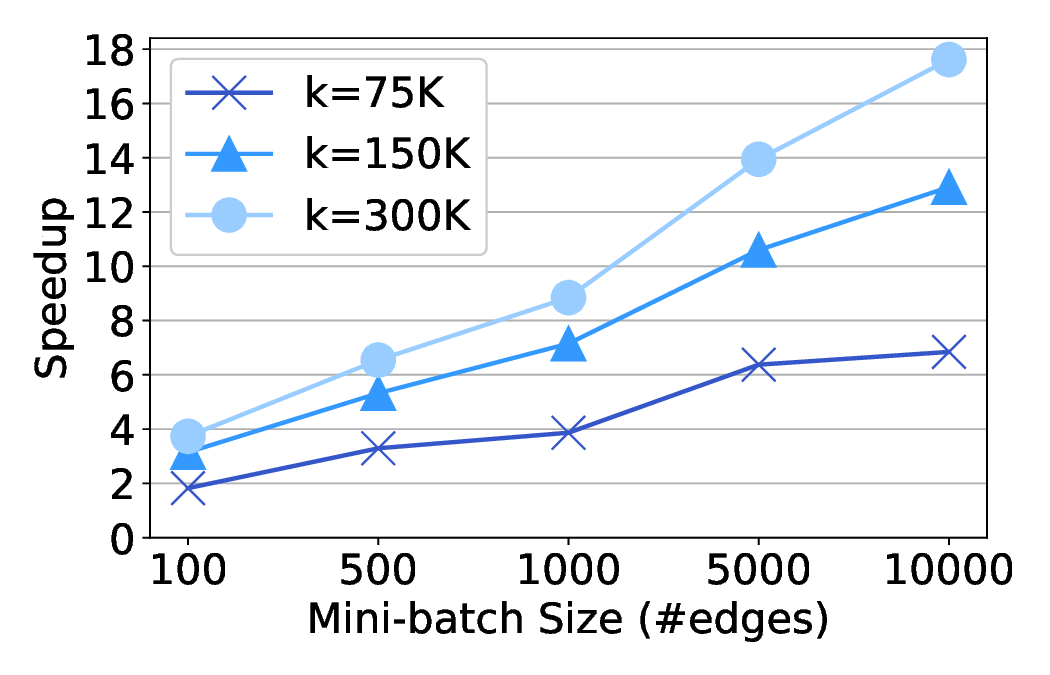} 
    \caption{\textit{MovieLens}}
    \label{fig:parallelization-ex2-movielens}
  \end{subfigure}
  \begin{subfigure}[t]{0.24\textwidth} 
    \includegraphics[width=\textwidth]{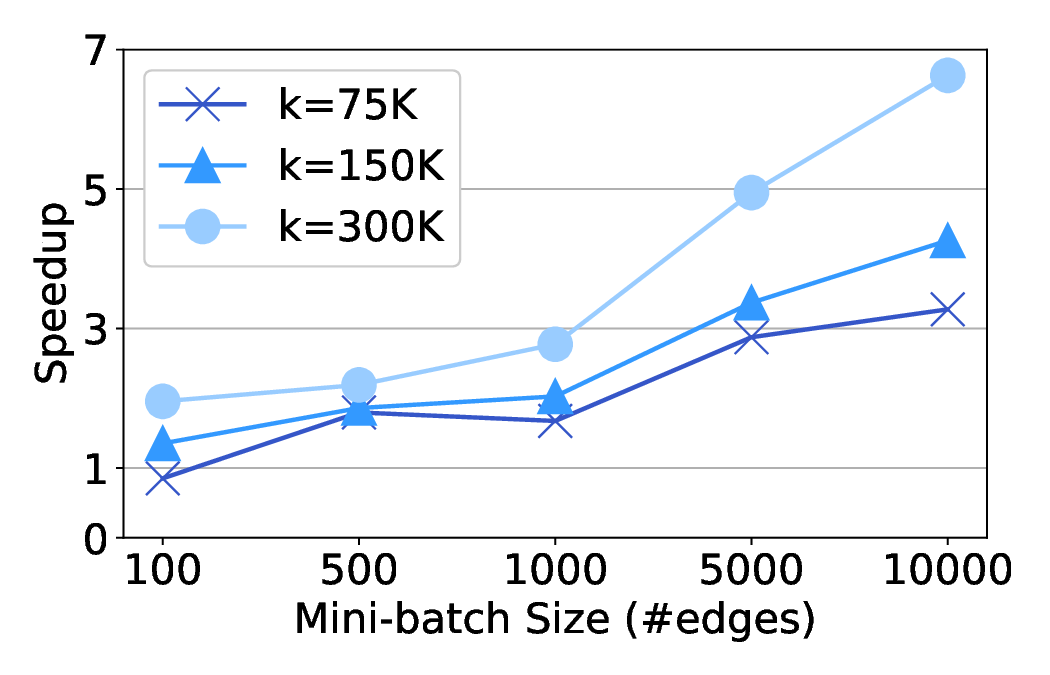}
    \caption{\textit{LiveJournal}}
    \label{fig:parallelization-ex2-livejournal}
  \end{subfigure}
  \begin{subfigure}[t]{0.24\textwidth} 
    \includegraphics[width=\textwidth]{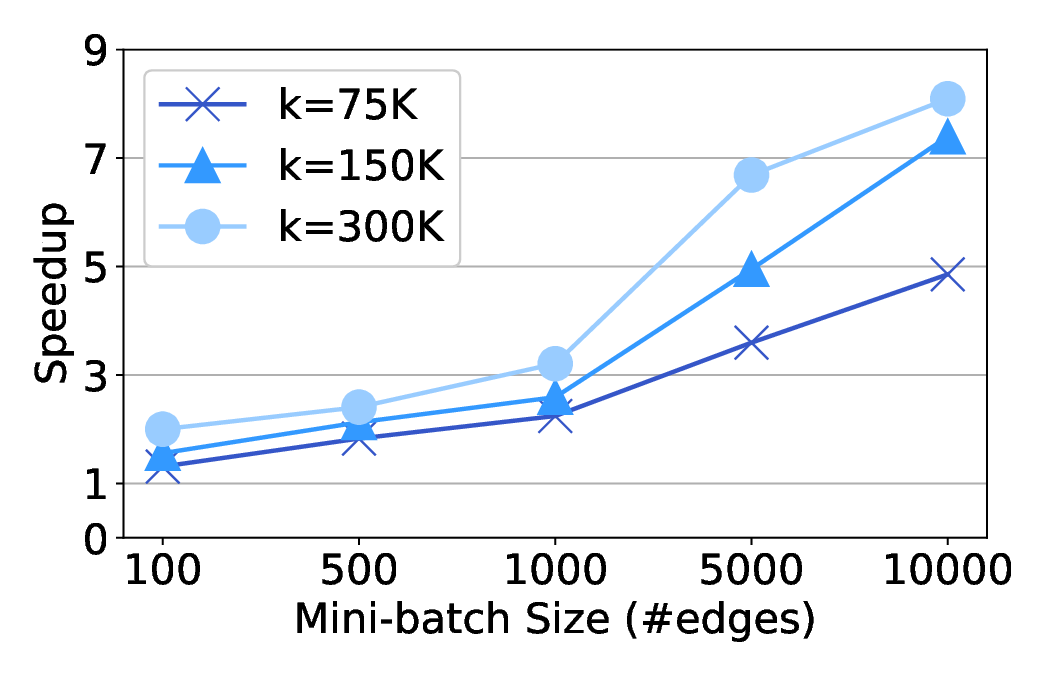}
    \caption{\textit{Trackers}}
    \label{fig:parallelization-ex2-trackers}
  \end{subfigure}
  \begin{subfigure}[t]{0.24\textwidth} 
    \includegraphics[width=\textwidth]{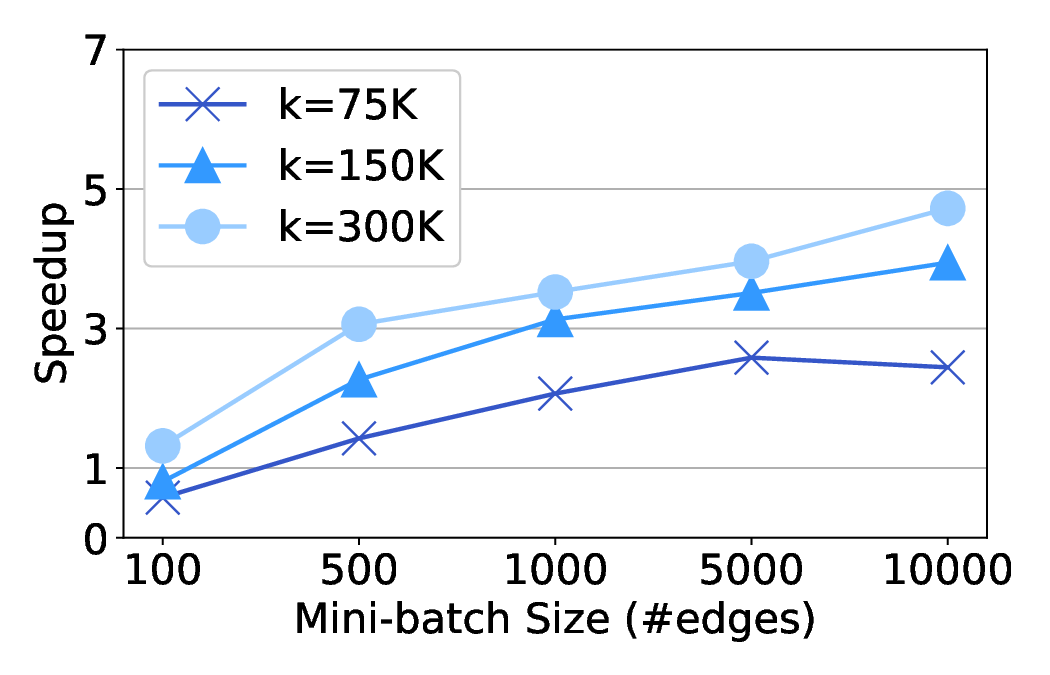}
    \caption{\textit{Orkut}}
    \label{fig:parallelization-ex2-orkut}
  \end{subfigure} 
  \caption{Speedup of \parabacus when varying the mini-batch size and using all $40$ threads and with a fixed sample size.}
  \label{fig:parallelization-minibatch}
\end{figure*}

\subsection{Scalability}
\label{subsec:scalability}
\noindent We now demonstrate the scalability of \abacus with respect to the input graph size.  
Specifically, we measure the elapsed time that \abacus needs to fully process bipartite graph streams with varying numbers of edges. 
Note that we do not consider the waiting time for the arrival of each edge, but we only measure the actual time that \abacus needs to process the edges of a stream (with a default deletions    ratio $\alpha=20\%$). 
We use different values for the sample size $k$, i.e.,~$75K$, $150K$, and $300K$ edges.  
We measure the elapsed times each time we process another $10\%$ of edges from the entire graph stream.  

Figure~\ref{fig:scalability-input-size} illustrates the elapsed time to process the whole graph stream.  
Specifically, Figure~\ref{fig:scalability-trackers} shows that \abacus scales linearly to the input graph size for the \textit{Trackers} graph. 
As expected, a larger sample size leads to increased elapsed times; however, the linearity effect is preserved. 
In addition, we observe a similar scalability trend in the \textit{Orkut} graph stream as shown in Figure~\ref{fig:scalability-orkut}. 
Note that we received linear scalability trends on the other real bipartite graph streams, yet we omit the results for the sake of space. 
\textit{Therefore, we conclude that \abacus scales linearly to the graph input size, which is in accordance with the Theorem~\ref{theorem:time-complexity}.}

\begin{figure*}[hbtp]
  \centering
  \begin{subfigure}[t]{0.24\textwidth}
    \includegraphics[width=\textwidth]{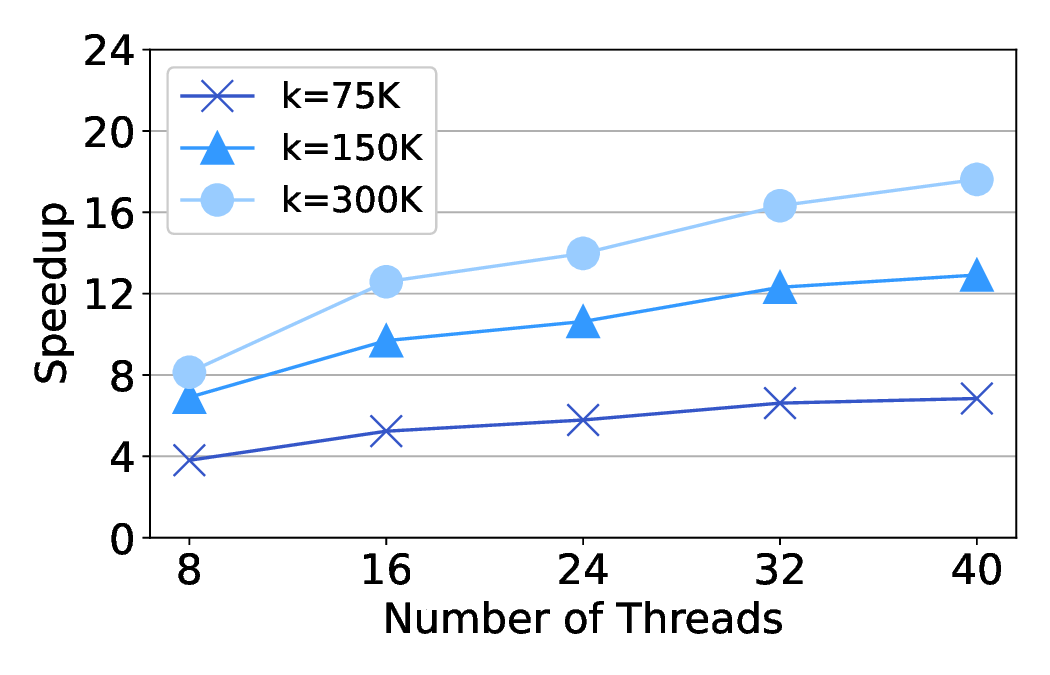}
    \caption{\textit{MovieLens}}
    \label{fig:parallelization-ex1-movielens}
  \end{subfigure}
  \begin{subfigure}[t]{0.24\textwidth} 
    \includegraphics[width=\textwidth]{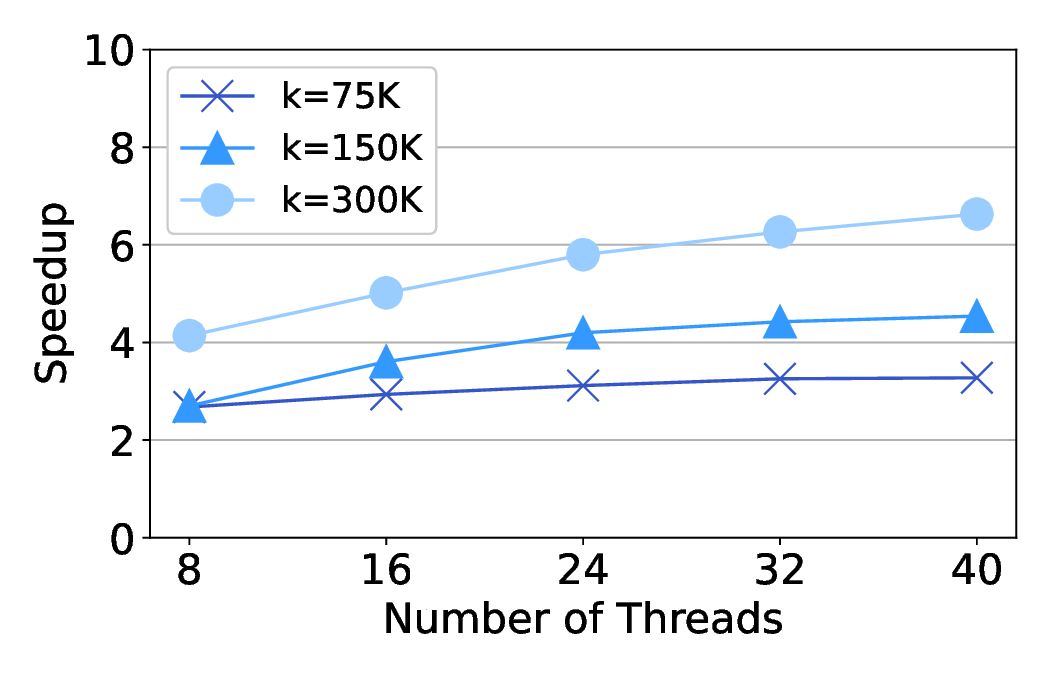}
    \caption{\textit{LiveJournal}}
    \label{fig:parallelization-ex1-livejournal}
  \end{subfigure}
  \begin{subfigure}[t]{0.24\textwidth} 
    \includegraphics[width=\textwidth]{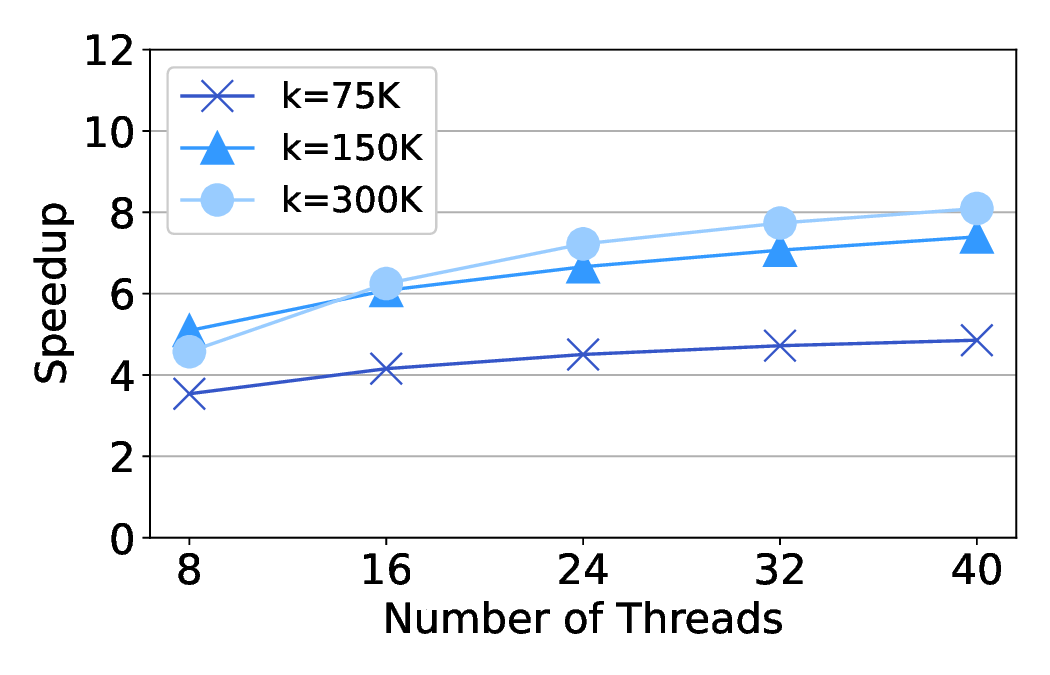}
    \caption{\textit{Trackers}}
    \label{fig:parallelization-ex1-trackers}
  \end{subfigure}
  \begin{subfigure}[t]{0.24\textwidth} 
    \includegraphics[width=\textwidth]{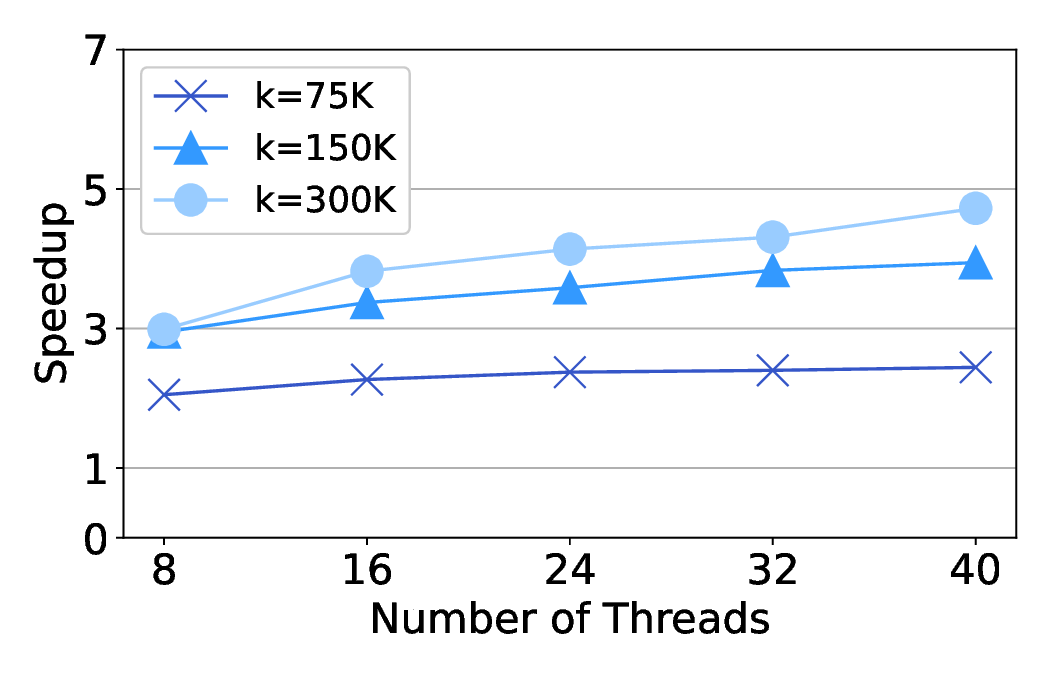}
    \caption{\textit{Orkut}}
    \label{fig:parallelization-ex1-orkut}
  \end{subfigure} 
  \caption{Speedup of \parabacus when varying the number of threads and using mini-batches of $10K$ edges. 
  }
  \label{fig:parallelization-threads} 
\end{figure*}

\subsection{Parallelization In-depth}
\label{subsec:parallelization} 
\noindent We now analyse in depth the performance of \parabacus. 
To this end, we consider the impact of the mini-batch size and the number of threads when processing of the entire bipartite graph stream. 
Specifically, we measure the speedup in runtime that \parabacus achieves over \abacus. 

\noindent \textbf{Mini-batch Size.} Figure~\ref{fig:parallelization-minibatch} illustrates the speedup that our algorithm achieves when we vary the mini-batch size. 
We illustrate the speedup for three different sample sizes $k$, namely, $75K$, $150K$, and $300K$ edges for each dataset. 
Note that we use all $40$ available threads in this experiment. 
We see that the larger the mini-batch size, the greater the speedup we achieve.  
When the mini-batch size increases, the work assigned to each thread also increases, and consequently, parallelism is more beneficial. 
For instance, we see that for a mini-batch size of $10K$ edges, in \textit{Movielens} in Figure~\ref{fig:parallelization-ex2-movielens} we achieve up to $17.6\times$ speedup when the sample size is $300K$ edges, $12.9\times$ speedup when the sample size is $150K$ edges, and $6.84\times$ speedup when the sample size is $75K$ edges. 
In Figure~\ref{fig:parallelization-ex2-trackers} on \textit{Trackers} we achieve up to $8.1\times$ speedup when the sample size is $300K$ edges, $7.4\times$ speedup when the sample size is $150K$ edges, and $4.85\times$ speedup when the sample size is $75K$ edges.   
Interestingly, we observe that the speedup ranges that we achieve differ across different datasets.  
To validate this empirically, we additionally counted the number of vertices examined due to the set intersection operations for a sample size of $150$K for each dataset.  
We found that the total number of vertices examined was $2.21B$ in \textit{Movielens}, $0.45B$ in \textit{Livejournal}, $0.84B$ in \textit{Trackers}, and $0.30B$ in \textit{Orkut}.  
This also correlates with the density of butterflies in each dataset, with \textit{Movielens} having the highest density and \textit{Orkut} having the lowest (as shown in Table~\ref{tab:datasets}).   
As we maintain uniform random samples, the denser the graph in terms of butterfly containment, the denser the sample, and consequently, more work is done for every set intersection to identify butterflies. 
Therefore, in graphs with higher density, such as \textit{Movielens}, we observe a relatively higher speedup due to the larger workload assigned to each thread. 
Conversely, in sparser graphs like \textit{Orkut}, the speedup achieved is still significant but comparatively lower.  
Also, note that the larger the sample size, the more significant the overall speedup \parabacus achieves. 
Parallelism is more beneficial in this case because the set intersection operations related to the per-edge butterfly counting are performed between neighboring sets of a bigger size.  

\noindent \textbf{Number of Threads.} Figure~\ref{fig:parallelization-threads} shows the speedup that \parabacus achieves when we vary the number of threads for a mini-batch size fixed to $10K$ edges. 
For each dataset, we illustrate the speedup for three different sample sizes, namely, $75K$, $150K$, and $300K$ edges. 
We observe that the more threads we utilize, the greater the speedup that \parabacus attains.  
Furthermore, as the sample size increases from $75K$ to $300K$ edges, we see the workload increase, and having many threads working in parallel pays off. 
As shown in Figures~\ref{fig:parallelization-ex1-movielens}-\ref{fig:parallelization-ex1-orkut}, we achieve up to $18\times$ speedup in \textit{Movielens}, up to $6.65\times$ in \textit{Livejournal}, up to $8.1\times$ in \textit{Trackers}, and up to $5\times$ in \textit{Orkut}. 
Similar to the mini-batch experiments, the observed speedup in our experiment varies across different datasets due to their individual density characteristics. 
Consequently, the overall computation performed by the set intersections also varies across datasets. 
Additionally, the larger the sample size the bigger the performance gains are as we increase the number of threads. 
For instance, in \textit{Movielens} for a sample size of $75K$ edges as we increase the threads from $8$ to $40$ the speedup we achieve ranges from $3.8-6.85\times$, for a sample of $150K$ edges it ranges from $6.9-12.91\times$, and for sample of size $300K$ edges it ranges from $8.13-17.91\times$.

\begin{figure}[t]
  \begin{subfigure}[t]{0.235\textwidth}   
    \includegraphics[width=\textwidth]{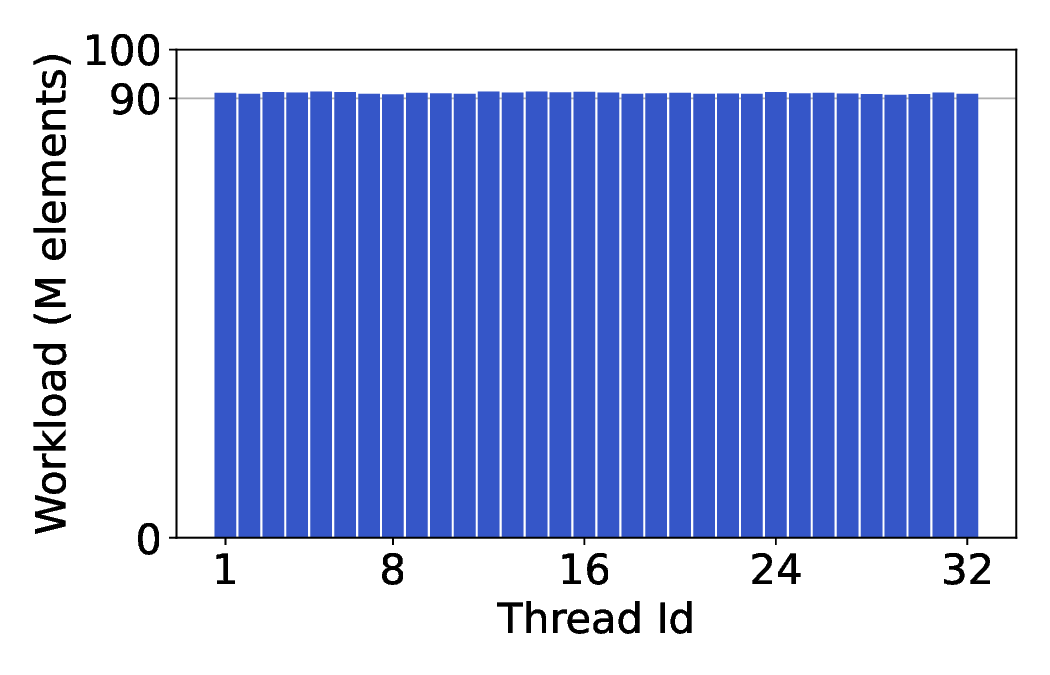}
    \caption{\textit{Movielens}}  
    \label{fig:workload-movielens}
  \end{subfigure}
    \begin{subfigure}[t]{0.235\textwidth}   
    \includegraphics[width=\textwidth]{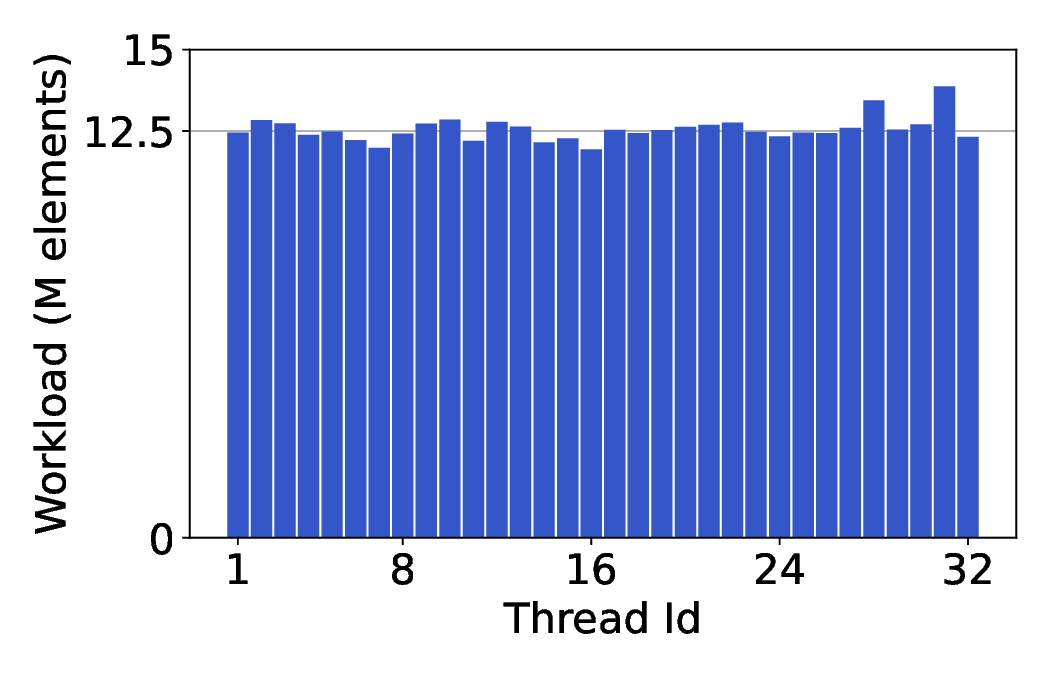}
    \caption{\textit{Orkut}}  
    \label{fig:workload-orkut}
  \end{subfigure} 
  \caption{Workload per thread.} 
  \label{fig:workload-per-thread} 
\end{figure}

\noindent \textbf{Balanced Load.} 
Figure~\ref{fig:workload-per-thread} illustrates the workload per thread, i.e.,~the number of checks that happened within the set intersection operations that take place during the butterfly counting. 
In this experiment, we use a sample size of $150$K edges, a mini-batch size of $10$K edges, and set the number of threads to $32$. 
Specifically, in Figure~\ref{fig:workload-movielens} we illustrate the workload per thread for the densest graph, \textit{Movielens}, and in Figure~\ref{fig:workload-orkut} we show the workload per thread for the sparsest graph, \textit{Orkut}, in terms of butterfly density. 
We observe that all threads are assigned similar workloads, which indicates that the computations of \parabacus are load-balanced. 
Additionally, we see that in \textit{Movielens} the average per-thread load is $90$M element comparisons, whereas in \textit{Orkut} it is $12.5$M element checks.  
This is in accordance with our previous observation that the work that is needed to process dense graphs is more than that for sparse ones. 
Same observations hold for the other datasets; however, we do not show the results for the sake of space.

\noindent \textit{We conclude that \parabacus achieves significant speedup using multi-threading, which allows for load-balanced processing of highly volatile bipartite graph streams that may receive thousands of updates per time unit.} 
\section{Related Work}
\label{sec:related-work}

\subsection{Triangles in Fully Dynamic Graph Streams} 
\label{subsec:triangle-fully-dynamic}

\noindent Kutzkov et al.~\cite{swat14-pagh} present the first method for counting triangles in fully dynamic graph streams, which adapts colorful triangle sampling~\cite{pagh2012colorful} to obtain a sparsified graph on which the ratio of two-paths that form triangles is estimated and is afterward scaled to the whole graph. 
However, \cite{swat14-pagh} is not a real-time streaming algorithm as it only computes an estimate once at the end of the stream, and requires more memory than that for storing the whole input graph in the worst case. 
Han et al.~\cite{asonam17-esd} present ESD, which maintains the current snapshot of a fully dynamic input graph stream. 
For every incoming edge, ESD tosses a biased coin, and {\em iff} it lands on heads, it updates the triangle counts by approximating the estimate changes, rather than calculating them precisely. 
Yet, ESD is not scalable as it has to maintain the whole graph in memory. 

Triest$_{FD}$~\cite{tkdd17-triest} maintains a uniform sample 
given a specific memory budget, and derives its estimates by multiplying the triangle counts it obtains from the sampled graph and the reciprocal of the probability that each triangle is sampled.  
While Triest$_{FD}$ plainly discards the edges that are not sampled without using them for updating its count estimates, ThinkD~\cite{pkdd18-thinkd, tkdd20-thinkd} also leverages the non-sampled edges to update its triangle estimates before discarding them.

\subsection{Butterflies in Static Graphs}
\label{subsec:butterfly-static} 
\noindent Wang et al.~\cite{bigdata14-rectangle} present the first technique for exact butterfly counting in static bipartite graphs through wedge enumeration.   
Sanei-Mehri et al.~\cite{kdd18-tirthapura} improve~\cite{bigdata14-rectangle} by selecting the cheapest bipartition to traverse when computing exact butterfly counts.  
Also, the authors proposed randomized algorithms based on sampling and sparsification for computing approximate butterfly counts. 
Wang et al.~\cite{vldb19-xuemin} propose a vertex
ordering-based 
method, which considers the vertex degrees such that it enumerates fewer wedges 
throughout the process of butterfly counting. 
\textsc{PARBUTTERFLY}~\cite{apocs20-shi} is a framework that conducts parallel butterfly counting with work-efficient guarantees in static bipartite graphs by exploring various vertex priority functions. 
Besides vertex ordering, PARBUTTERFLY also offers other type of orderings (or rankings) such as \textit{side}, \textit{approximate degree}, \textit{log-degree}, \textit{degeneracy}, 
\textit{complement degeneracy}, and \textit{approximate complement degeneracy} orderings. 
However, adapting the vertex-ordering technique to the streaming setting is infeasible as the priorities continuously change after each incoming edge and sorting becomes a huge overhead.  
Zhou et al.~\cite{vldb21-lei, vldbj23-lei} build on the techniques of~\cite{vldb19-xuemin} and design methods for counting butterflies in uncertain bipartite graphs. 
Xu et al.~\cite{vldb22-bingsheng} propose a GPU-based butterfly counting algorithm that uses an adaptive strategy, which balances the workload among GPU threads for maximizing efficiency. 
All the aforementioned methods are tailored to static graphs and, thus, are unsuitable for the streaming setting.

\subsection{Butterflies in Insert-Only Graph Streams}
\label{subsec:butterfly-insert-only} 
\noindent Wang et al.~\cite{vldbj22-wang} extend~\cite{vldb19-xuemin} 
for dynamic graphs. 
Similarly, adapting the vertex-ordering approach 
to the streaming setting is infeasible as the priorities continuously change after each incoming edge and sorting becomes a huge overhead -- besides~\cite{vldbj22-wang} must store the whole graph in main memory, which is prohibitive for streaming algorithms that 
maintain only a sample of the graph in main memory. 
Sanei-Mehri et al.~\cite{cikm19-fleet} propose FLEET, 
which utilizes adaptive sampling for counting butterflies in bipartite 
streams using a fixed memory. 
Li et al~\cite{tkde21-cas} present a Co-Affiliation Sampling (CAS) approach, which uses sampling and sketching to provide accurate estimates of butterfly counts in bipartite graph streams. 
Sheshbolouki et al.~\cite{tkdd22-sgrapp} propose sGrapp, an approximate adaptive window-based algorithm for counting butterflies after conducting a data-driven empirical analysis to reveal the temporal organizing principles of butterflies in real-time streams. 
All the above-mentioned streaming methods are tailored to insert-only bipartite graph streams. 
To the best of our knowledge, \abacus is the first method that provides accurate butterfly counts in fully dynamic bipartite graph streams. 
\section{Conclusions}
\label{sec:conclusions}
\noindent We proposed \abacus, the first algorithm that estimates butterfly counts 
in fully dynamic graph streams, which entail both insertions and deletions of edges. 
We showed that \abacus is: 
(a) accurate in estimating butterfly counts as it achieves up to $148\times$ smaller error than the baselines; 
(b), efficient as it performs butterfly counting with similar throughput as the competitors while also attaining linear scalability; and 
(c) theoretically sound as it consistently and provably provides unbiased estimates of low variance at any time as the input bipartite graph evolves. 
Additionally, we presented \parabacus, the parallel version of \abacus, which processes a graph stream in mini-batches and
counts butterflies in a load-balanced manner using versioned samples. 
We showed that \parabacus achieves considerable speedup and is thus suitable for applications in the streaming setting. 

\section*{Acknowledgments}
\label{sec:acknowledgements}
\noindent The authors would like to thank Evangelos Kipouridis for his help in completing the variance proof. 
Furthermore, we gratefully acknowledge funding from the German Federal Ministry of Education and Research under the grant BIFOLD23B.

\bibliographystyle{IEEEtran} 
\interlinepenalty=10000
\bibliography{references}

\clearpage
\newpage
\appendices
\section{Proof for Lemma~\ref{lemma:discovery-probability-butterflies} (Butt. Discovery Probability)}

\begin{proof}[Proof]
\label{proof:discovery-probability-butterflies}
The probability $p^{(t)}$ that a specific set of three edges, namely, $e_1 = \{u,v\}$, $e_2 = \{w,x\}$, and $e_3 = \{y,z\}$ being selected in the sample $\mathcal{S}^{(t)}$ can be written using conditional probabilities as $P(e_1) * P(e_2 | e_1) * P(e_3 | e_1, e_2)$, where $P(e_2 | e_1)$ is the probability that edge $e_2$ exists in the sample given that edge $e_1$ already exists in the sample. 
Similarly, $P(e_3 | e_1, e_2)$ is the probability that edge $e_3$ exists in the sample given that the edges $e_1$ and $e_2$ exist in the sample. 
As the sample is uniform, the probability of containing edge $e_1$ is $P(e_1) = \frac{y}{T}$, where $y=\min (k, |E|+c_g+c_g)$ is the size of the sample, and $T=|E|+c_g+c_g$ is the size of the stream with all the edges in the stream that are not yet deleted.     
Given that edge $e_1$ exists in the sample, the probability of edge $e_2$ existing in the sample is $P(e_2|e_1) = \frac{y-1}{T-1}$ since there are only $T-1$ edges remaining after edge $e_1$ is selected, and $y-1$ of them can be selected to form the sample.  
Similarly, given that edges $e_1$ and $e_2$ exist, the probability of edge $e_3$ existing in the sample is $P(e_3|e_1,e_2) = \frac{y-2}{T-2}$. 
Therefore, Equation~\ref{eq:reciprocal} represents the probability of selecting a uniform random sample of $y$ edges from the graph stream containing the edges $e_1$, $e_2$, and $e_3$.  
\end{proof}

\section{Proof of Theorem~\ref{theorem:unbiasedness} (Unbiasedness)}
\label{appendix:proof-unbiasedness}

\begin{proof}[Proof] 
\label{proof:unbiasedness}
Let us assume a butterfly $(\{u,v,w,x\}, s) \in \mathcal{C}^{(t)}$, 
for an insertion $e^{(s)} = (\{u,v\}, +)$ without loss of generality, where $\mathcal{C}^{(t)}$ is the set of created butterflies at time $t$ or earlier.  
Furthermore, the increment amount of change, $c_{uvwx}^{(s)}$, in the butterfly count 
$c$ that is attributed to the creation of the butterfly ($\{u,v,w,x\}, s$) (Algorithm~\ref{alg:abacus}, lines~6,11) is as follows:
\begin{equation}
\label{eq:amount_of_change_unbiasedness_insertion}
\hspace{-0.2cm}
    c_{uvwx}^{(s)} = \begin{cases}
        \frac{1}{p^{(s-1)}} = \frac{1}{Pr(\{u,x\} \in \mathcal{S}^{(s-1)} \cap \{v,w\} \in \mathcal{S}^{(s-1)} \cap \{w,x\} \in \mathcal{S}^{(s-1)})} \\ \text{, if 
        $\{u,x\}$, $\{v,w\}$, and $\{w,x\} \in \mathcal{S}^{(s-1)}$} \\ $0$\text{, otherwise} \nonumber
    \end{cases} 
\end{equation}
\noindent which holds from Equation~\ref{eq:discovery-probability-butterflies}. 
Therefore, it trivially holds that $\mathop{\mathbb{E}}(c_{uvwx}^{(s)}) = 1$. 
Similarly, assume a butterfly $(\{u,v,w,x\}, s) \in \mathcal{D}^{(t)}$, 
for an edge deletion $e^{(s)} = (\{u,v\}, -)$ without loss of generality, where $\mathcal{D}^{(t)}$ is the set of deleted butterflies at time $t$ or earlier. 
Furthermore, the amount of change, $d_{uvwx}^{(s)}$, in the butterfly count $c$ that is attributed to the deletion of the butterfly ($\{u,v,w,x\}, s$) (Algorithm~\ref{alg:abacus}, lines~6,11) is as follows:
\begin{equation}
\label{eq:amount_of_change_unbiasedness_deletion}
\hspace{-0.2cm}
    d_{uvwx}^{(s)} = \begin{cases}
        \frac{-1}{p^{(s-1)}} = \frac{-1}{Pr(\{u,x\} \in \mathcal{S}^{(s-1)} \cap \{v,w\} \in \mathcal{S}^{(s-1)} \cap \{w,x\} \in \mathcal{S}^{(s-1)})} \\ \text{, if 
        $\{u,x\}$, $\{v,w\}$, and $\{w,x\} \in \mathcal{S}^{(s-1)}$} \\ $0$\text{, otherwise} \nonumber
    \end{cases} 
\end{equation}
\noindent which holds from Equation~\ref{eq:discovery-probability-butterflies}. 
Thus, it holds that $\mathop{\mathbb{E}}(d_{uvwx}^{(s)}) = -1$.  
Therefore, for the butterfly count, $c$, it holds that:
\begin{equation}
\label{eq:def_summation_global_count} 
    c^{(t)} = \sum_{(\{u,v,w,x\}, s) \in \mathcal{C}^{(t)}} c_{uvwx}^{(s)} + \sum_{(\{u,v,w,x\}, s) \in \mathcal{D}^{(t)}} d_{uvwx}^{(s)} \nonumber
\end{equation}
\noindent By linearity of expectation the following equality holds:
\begin{align}
\label{eq:def_lin_exp_global_count}
    \mathop{\mathbb{E}}(c^{(t)}) &= \sum_{(\{u,v,w,x\}, s) \in \mathcal{C}^{(t)}} \mathop{\mathbb{E}}(c_{uvwx}^{(s)}) + \sum_{(\{u,v,w,x\}, s) \in \mathcal{D}^{(t)}} \mathop{\mathbb{E}}(d_{uvwx}^{(s)}) \nonumber \\
    &= \sum_{(\{u,v,w,x\}, s) \in \mathcal{C}^{(t)}} (+1) + \sum_{(\{u,v,w,x\}, s) \in \mathcal{D}^{(t)}} (-1) \nonumber \\ &= |\mathcal{C}^{(t)}| - |\mathcal{D}^{(t)}| = |B^{(t)}|
\end{align} 
\end{proof}

\section{Proof of Theorem~\ref{theorem:variance} (Variance)}
\label{appendix:proof-variance}

\begin{proof}[Proof] \label{proof:variance}
As we described in Algorithm~1, \abacus maintains a uniform random sample at all times, which has size $0 \leq |\mathcal{S}| \leq k$, where $k$ is our memory budget or the maximum sample size. 
In the beginning of the bipartite graph stream \abacus's sample is empty and \abacus keeps the first $k$ edges (attaining zero variance and conducting exact estimation). 
Later on, \abacus always maintains a sample that has size $k$. 
The only case where the sample size will get smaller than $k$ is when an edge deletion arrives and that edge is specifically stored in $\mathcal{S}$, which is later compensated directly by the subsequent insertions such that the sample size becomes $k$ again. 
Therefore, we assume that the sample $\mathcal{S}$ has $k$ edges. 
It shall be clear from the proof that our analysis is robust and would give nearly the same bounds even if $\mathcal{S}$ has size close to $k$ but not exactly $k$.

Consider a particular moment in time denoted by $t$, where the sample $\mathcal{S}$ contains $k$ edges. 
At the given time $t$, let $|B^{(t)}|$ denotes the true count of butterflies (the ground truth) observed in the graph stream so far, while $|B_{\mathcal{S}}^{(t)}|$ represents the number of butterflies present in the sample. 
In the sequel, we omit the time $t$ notations from our formulas for the sake of clarify. 
Figure~\ref{fig:instance-variance-proof} illustrates an instance of sample $\mathcal{S}$, where the edges that appear in the graph stream are depicted with black color, while the edges that are present in the sample $\mathcal{S}$ are highlighted in red. 
We see that at time $t$ the true number of butterflies is $3$, i.e.,~$B_1, B_2$, and $B_3$, but using the red edges in the sample we count only $1$, namely, $B_1$. 
Therefore, we observe that through the sample $\mathcal{S}$, \abacus can either spot a formed butterfly pattern or not. 
To facilitate our variance analysis, we introduce \textit{indicator random variables} $X_i$, which equals to $1$ if the butterfly $B_i \in \mathcal{S}$, and $0$ if $B_i \notin \mathcal{S}$. 
Subsequently, the total number of butterflies that \abacus spots in the sample $\mathcal{S}$ is $|B_{\mathcal{S}}| = \sum_{i} X_i$, whereas after extrapolating (as we described in Algorithm~\ref{alg:abacus}) the actual butterfly count that \abacus returns is $c = \gamma \sum_{i} X_i$, where 
$\gamma$ is equal to $\frac{groundTruth}{groundTruth \times \binom{E-4}{k-4} / \binom{E}{k}} = \binom{E}{k} / \binom{E-4}{k-4}$ since we already proved in Theorem~\ref{theorem:unbiasedness} that \abacus is an unbiased estimator.   
From the definition of variance, the following equality holds: 
\begin{align}
\label{eq:variance-proof-seq}
    Var[c] &= \mathbb{E}[(c - \mathbb{E}[c])^2] \nonumber \\
    &= \mathbb{E}[c^2] - \mathbb{E}[c]^2 
\end{align} 
In Theorem~\ref{eq:unbiasedness}, we showed that 
$\mathbb{E}[c^{(t)}] = |B^{(t)}|$ 
is unbiased.  
Therefore, we must calculate the other portion of Equation~\ref{eq:variance-proof-seq}. 
In specific, we unfold the $\mathbb{E}[c]^2$ as follows: 
\begin{align*}
\label{eq:variance-proof-seq-2}
    \mathbb{E}[c^2] &= \gamma^2 \mathbb{E}[(\sum_i X_i)^2] \\ 
    &= \gamma (\mathbb{E}[\gamma \sum_i X_i]) 
    + 2 \gamma^2 (\mathbb{E}[\sum_{i,j} X_iX_j]) \\ 
    &= \gamma (\mathbb{E}[c]) 
    + 2 \gamma^2 (\mathbb{E}[\sum_{i,j} X_iX_j]) \\ 
    &= \gamma (\mathbb{E}[c]) 
    + 2 \gamma^2 \sum_{i,j} \mathbb{E}[X_iX_j] \\ 
    &= \gamma (\mathbb{E}[c]) 
    + 2 \gamma^2 \sum_{i,j} p_{i,j} 
\end{align*} 
\noindent where $p_{i,j} =\mathbb{E}[X_iX_j]$ is the probability that the graph sample $\mathcal{S}$ contains both the butterfly $B_i$ and the $B_j$. 
In order to calculate the probability $p_{i,j}$ we distinguish the following three cases as shown in Figure~\ref{fig:pij_probability_cases}.  
Specifically, Figure~\ref{fig:pij_probability_cases}i describes the case where the butterflies $B_i$ and $B_j$ exist in the sample $\mathcal{S}$, but they do not share any edge. 
Figure~\ref{fig:pij_probability_cases}ii describes the case where the butterflies $B_i$ and $B_j$ exist in the sample $\mathcal{S}$, and they share one edge. 
Figure~\ref{fig:pij_probability_cases}i describes the case where the butterflies $B_i$ and $B_j$ exist in the sample $\mathcal{S}$, and they share two edges. 
Notice that it is not possible for two distinct 
butterflies to share three edges, because then they would be the same butterfly. 
\begin{figure}[t]
\centering 
\includegraphics[width=0.44\textwidth]{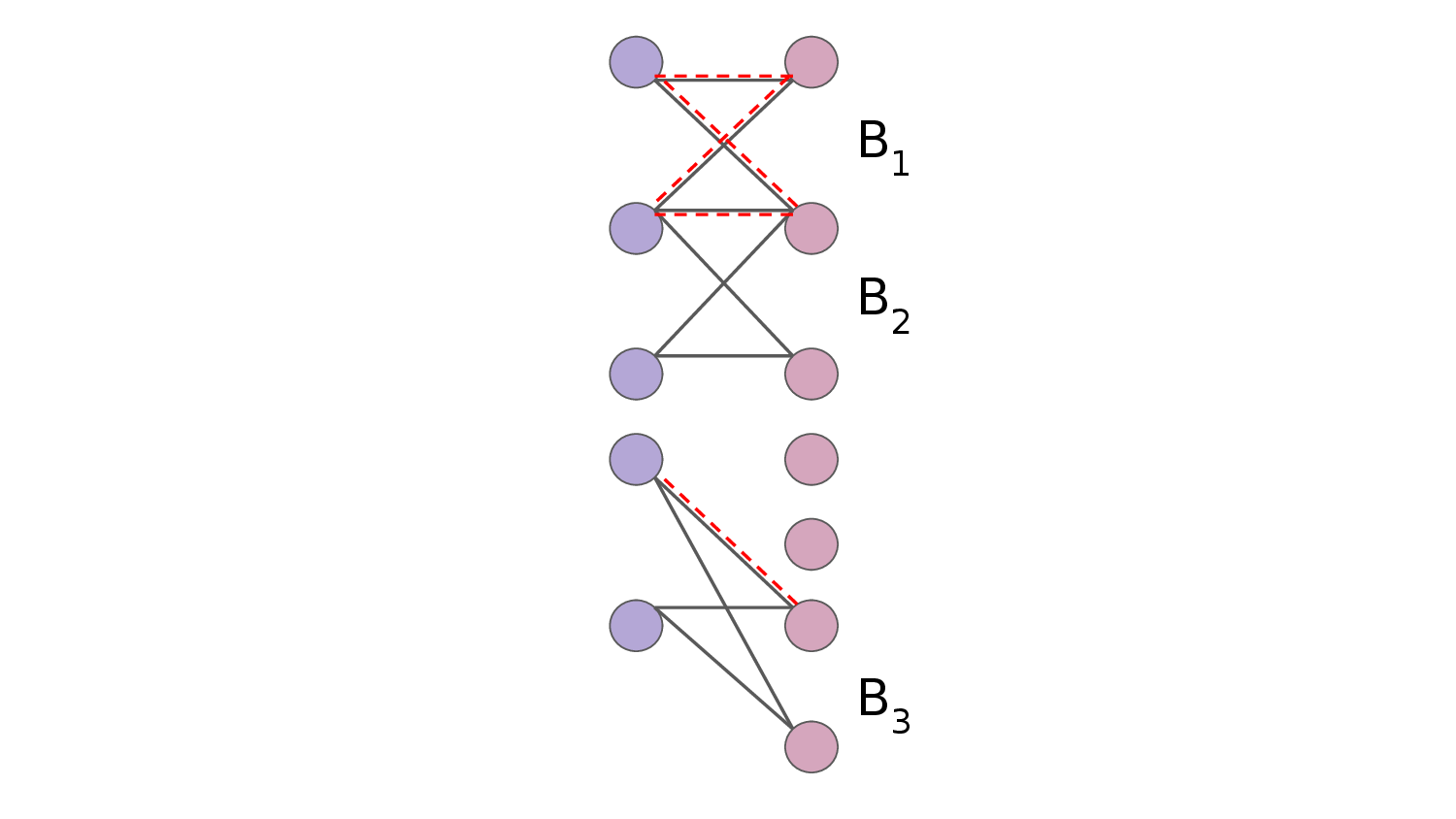} 
\caption{Instance of the graph stream (black edges) and of the graph sample (red edges) \abacus maintains at timestamp $t$.} 
\label{fig:instance-variance-proof} 
\end{figure}
The probability that \abacus encounters case~i is $\frac{\binom{|E|-8}{k-8}}{\binom{|E|}{k}}$, where $|E|$ is the number of edges that are still valid in the graph stream (have not been deleted), and $k$ is the memory budget of \abacus that indicates the maximum number of edges in the maintained graph sample $\mathcal{S}$. 
Similarly, the probability that \abacus encounters case~ii is $\frac{\binom{|E|-7}{k-7}}{\binom{|E|}{k}}$, and the probability that it encounters case~iii is $\frac{\binom{|E|-6}{k-6}}{\binom{|E|}{k}}$. 
The above probabilities stem from the fact that \abacus always maintains a uniform random sample. 
Consequently, the probability of the equation above is as follows: 
\begin{align} 
    \sum_{i,j} p_{i,j} &= y_1 \frac{\binom{|E|-8}{k-8}}{\binom{|E|}{k}} + y_2 \frac{\binom{|E|-7}{k-7}}{\binom{|E|}{k}} + y_3 \frac{\binom{|E|-6}{k-6}}{\binom{|E|}{k}} \nonumber
\end{align}
\noindent where $y_1, y_2, y_3$ indicate how many pairs of butterflies of case~$i$,~$ii$,~$iii$ exist in the graph at the current time $t$ (which we omit for simplicity).  
By combining the two equations above, we get the following closed form for the variance of \abacus: 
\begin{align} 
    Var[c] &= \mathbb{E}[c^2] - \mathbb{E}[c]^2 \nonumber \\
    = \gamma \mathbb{E}[c] &+ 2 \gamma^2 \sum_{i,j} p_{i,j} - \mathbb{E}[c]^2 \nonumber \\ 
    = \gamma \mathbb{E}[c] & - \mathbb{E}[c]^2 \nonumber \\ 
    + 2 \gamma^2 (y_1 \frac{\binom{|E|-8}{k-8}}{\binom{|E|}{k}} &+ y_2 \frac{\binom{|E|-7}{k-7}}{\binom{|E|}{k}} + y_3 \frac{\binom{|E|-6}{k-6}}{\binom{|E|}{k}} ) \nonumber 
\end{align}
\noindent where by upper bounding 
$\sum_{i,j} p_{i,j}$ as follows: 
\begin{align*} 
    \sum_{i,j} p_{i,j} \leq (y_1+y_2+y_3) \times \frac{\binom{|E|-6}{k-6}}{\binom{|E|}{k}} = \binom{E[c]}{2} \times \frac{\binom{|E|-6}{k-6}}{\binom{|E|}{k}} 
\end{align*}
\noindent we can upper bound the 
the variance as follows:
\begin{align} 
    Var[c] \leq \gamma \mathbb{E}[c] \nonumber  
    + 2 \gamma^2 \binom{E[c]}{2} \times \frac{\binom{|E|-6}{k-6}}{\binom{|E|}{k}} - \mathbb{E}[c]^2 
\end{align}
\noindent where $c$ is the butterfly count estimation of \abacus, whose expected value equals the ground truth, $|E|$ is the number of valid edges in the graph stream that have not yet been deleted, and $k$ is the memory budget or equivalently the maximum number of edges in the sample $\mathcal{S}$ that \abacus maintains. 
We observe that \abacus' variance is bounded, and it provably gives accurate estimates. 
\begin{figure}[t]
\centering 
\includegraphics[width=0.45\textwidth]{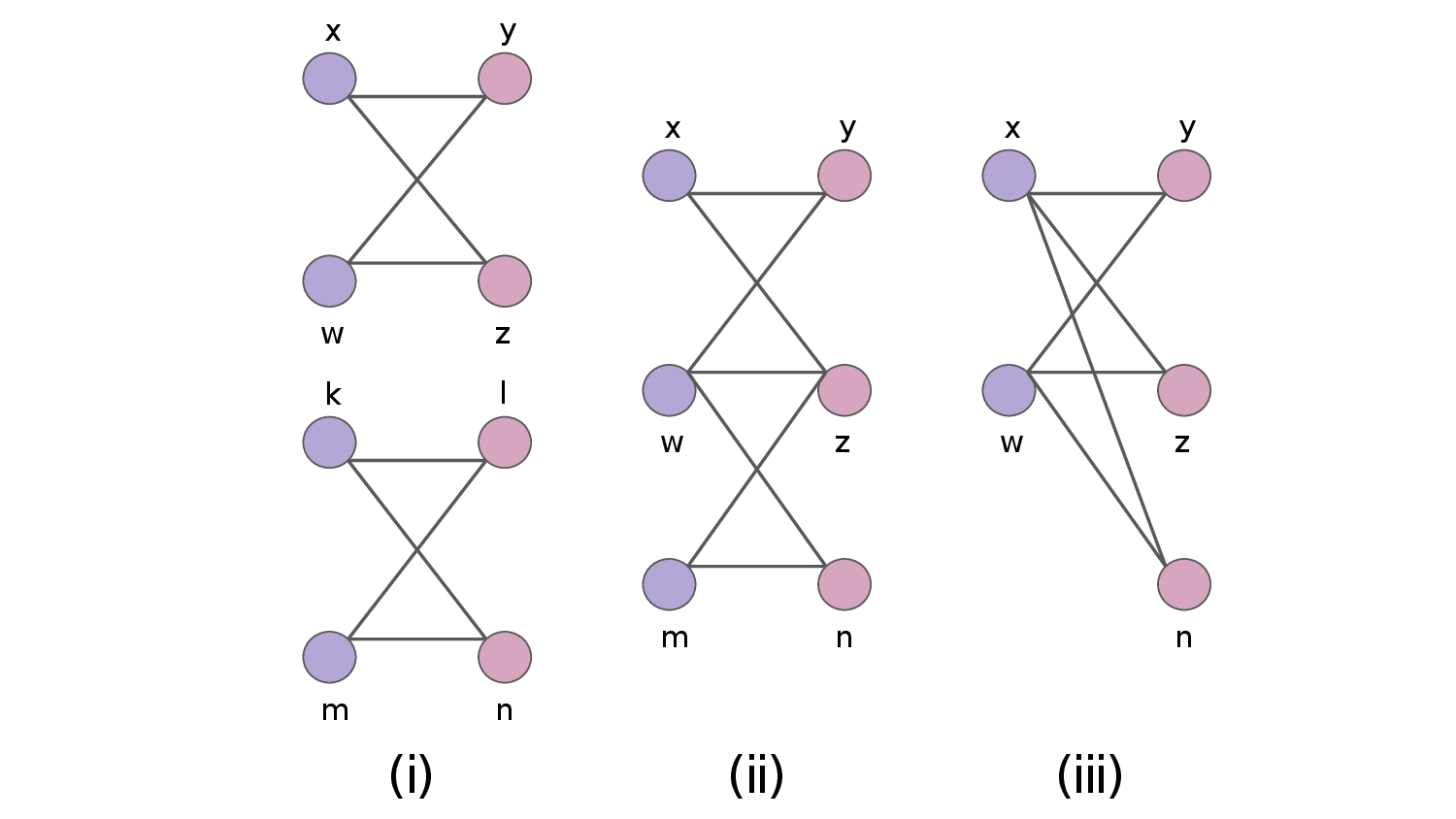} 
\caption{Cases where the probability $p_{i,j}$ is non-zero.} 
\label{fig:pij_probability_cases} 
\end{figure}
\noindent \textit{Tightness.} The upper bound we derived on the variance of \abacus is \textit{tight}.  
In specific, consider the $2,3$-bipartite graph that is a clique. 
There, the variance satisfies the above inequality 
on the equality condition. 
\end{proof}

\section{Proof of Corollary~\ref{corollary:concentration} (Concentration)}
\label{appendix:proof-concentration}

\begin{proof}[Proof] \label{proof:concentration} 
The proof stems directly from applying the Chebysev's inequality since we know the variance (we proved a closed form) and the expected value of \abacus's estimation. 
\end{proof} 

\end{document}